\newcommand{\E}{{\bf{E}}}
\newcommand{\PP}{{\bf{P}}}
\newcommand{\PPP}{{\text{P}}}
\newcommand{\Var}{{\bf{Var}}}
\newtheorem{tm}{Theorem}
\newtheorem{lem}{Lemma}
\newtheorem{rem}{Remark}
\begin{document}

\bibliographystyle{plain}
\parindent=0pt
\centerline{\LARGE \bfseries Clustering coefficient }
\centerline{\LARGE \bfseries of random intersection graphs} 
\centerline{\LARGE \bfseries with infinite degree variance}

\par\vskip 3.5em

\centerline{Mindaugas  Bloznelis and Valentas Kurauskas}

\par\vskip 1.7em

\centerline{ Vilnius university, 
Lithuania}

\bigskip
\centerline{{\it Dedicated to Professor Micha\l\ Karo\'nski on the occasion of his 70th birthday}}



\bigskip





\begin{abstract}
For 
a random intersection graph with a power law degree sequence 
having a finite mean and an infinite variance
we show that the global clustering
coefficient admits a tunable asymptotic distribution.
\par
\end{abstract}

\smallskip
{\bfseries key words:}  
clustering coefficient, power law, infinite variance, random intersection graph, affiliation network.

\par\vskip 2.5em



\section{Introduction}
The global clustering coefficient $C_G$ of a graph $G$ is 
the ratio
 $C_G=3\Delta/\Lambda$, where $\Delta$ is the number of triangles  
 and $\Lambda$ is the number of paths of length $2$.
Another way to represent the global clustering coefficient  is by
the conditional probability that
a randomly chosen  triple of vertices 
makes up a triangle given that the first two vertices are adjacent to the 
third one.
Formally,
\begin{displaymath}
C_G=\PP^*(v_1^*\sim v_2^*\,| v_1^*\sim v_3^*,\, v_2^*\sim v_3^*),
\end{displaymath}
where $(v_1^*,v_2^*,v_3^*)$ 
is an ordered triple of vertices sampled uniformly at random and the probability
$\PP^*$ refers to the sampling. By $\sim$ we denote the adjacency relation.

In this paper we study the relation between the clustering
  coefficient and the tail of the degree sequence in large complex networks. 
 We focus on  random intersection graph models 
 of real affiliation networks
 (mode two networks),
\cite{karonski1999}, \cite{godehardt2003}, \cite{BloznelisGJKR2015-1}.
They
admit tunable degree distribution 
and  
non-vanishing
clustering coefficient 
\cite{Newman2001}, \cite{Deijfen}, \cite{Bloznelis2013},
 \cite{BloznelisGJKR2015-2}.
Definition of a random intersection graph is recalled below
 in this section. 

The global clustering coefficient
$C_G$ of a realised instance $G$ of a random graph is a random variable.
We note that generally this random variable behaves differently 
depending on whether
the degree variance is finite or infinite \cite{Bloznelis2013},  \cite{JacobMorters2013},
\cite{OstroumovaSamosvat2014}.
When the degree variance is finite
the global clustering coefficient
$C_G$ can be approximated by the corresponding numerical characteristic
of the underlying random intersection graph model, the conditional probability
$\alpha_C:=\PP(v_1^*\sim v_2^*\,| v_1^*\sim v_3^*,\, v_2^*\sim v_3^*)$,
\cite{kurauskas2015}.	
 Here and below $\PP$ refers to
 all the sources of randomness defining the events considered
(these are the uniform sampling of vertices 
$(v_1^*,v_2^*,v_3^*)$ and random graph generation mechanism 
in the present context).
We remark that   $\alpha_C$ admits a simple asymptotic expression in terms
of
 the first and second moment of the degree sequence
\cite{Bloznelis2013}, \cite{GJR2012},  \cite{BloznelisGJKR2015-2},
\cite{BloznelisKurauskas2015}.

The question about the behaviour of 
the clustering coefficient $C_G$ 
when the degree variance is infinite  
remained open.
We address this question in the present paper.
Our study is analytical. 
For an infinite degree variance we show that
$C_G$ admits 
a non-degenerate asymptotic distribution 
with 
tunable characteristics in the case where the weights defining 
the underlying  random intersection graph  achieve a certain balance. 
In this way our theoretical findings contribute 
to the discussion about whether and when
a power law network 
model 
with an infinite degree variance
can 
have 
a
non-vanishing global clustering coefficient, 
cf. \cite{OstroumovaSamosvat2014}, where a negative result was obtained.

The paper is organized as follows. In this 
section we introduce
random intersection graphs,
formulate and discuss
our results.
Proofs are given in section 2.
Technical lemmas are postponed to Section 3.

\subsection{Random intersection graphs}

Random intersection graphs model  social networks, where the actors
establish communication links provided that they share
some common  attributes (collaboration networks, actor networks, etc.).
A random intersection graph $G$ on the vertex set 
$V=\{v_1,\dots, v_n\}$ is defined by a random
bipartite graph, denoted by $H$, with the bipartition $V\cup W$, 
where $W=\{w_1,\dots, w_m\}$ is an auxiliary set of attributes. Two vertices in 
$G$ are adjacent whenever they have a common neighbour in $H$. This neighbour is 
called a witness of the adjacency relation. 

In the {\it active}  graph, 
denoted by $G(n,m,\PPP)$,
vertices $v\in V$ select their neighbourhoods
$S_v\subset W$ in $H$ independently at random according to the 
probability distribution  $\PP(S_v=A)=\PPP(|A|){\binom{m}{|A|}}^{-1}$,
$A\subset W$. Here $\PPP$ is the probability distribution modeling 
the size $|S_v|$ of the 
neighbourhood of $v$ in $H$. Given the size $|S_v|$, 
the elements of $S_v$ 
are selected
uniformly at random. Two vertices $u,v$  are adjacent in $G$
whenever the random sets $S_u$ and $S_v$ 
(called attribute sets of $u$ and $v$) intersect. 

In the {\it passive} graph,  
denoted by $G^{\star}(n,m,\PPP^{\star})$, attributes
 $w\in W$ select their neighbourhoods
$D_w\subset V$ in $H$ independently at random according to the 
probability distribution  
$\PP(D_w=A)=\PPP^{\star}(|A|){\binom{n}{|A|}}^{-1}$,
$A\subset V$. 
Two vertices $u,v$ are adjacent in $G^{\star}(n,m,\PPP^{\star})$ 
whenever $u,v\in D_w$ for some $w\in W$.

The {\it inhomogeneous}
  graph, denoted by $G(n,m,\PPP_X, \PPP_Y)$, 
interpolates between
the active and passive models. It
is defined by the random bipartite graph, where attributes $w_i\in W$ and
vertices $v_j\in V$ are assigned independent random weights $X_i$ and $Y_j$
respectively. 
The weights model the attractiveness of attributes and activity of actors.
Every pair $(w_i,v_j)\in W\times V$ is
linked in $H$ with probability
$p_{ij}=\min\{1, X_iY_j/\sqrt{mn}\}$ independently of the other pairs.
Here $X_1,\dots, X_m$ and $Y_1,\dots, Y_n$ are non-negative 
independent random variables with 
the distributions $\PPP_X$ and $\PPP_Y$ respectively.

In what follows we assume that $n/m$ is bounded and it is bounded away 
from zero as $m,n\to+\infty$, denoted by $n=\Theta(m)$. 
The rationale behind this assumption is that in the range $n=\Theta(m)$ the active, 
passive and inhomogeneous models 
admit  non-degenerate asymptotic degree distributions including power laws \cite{Bloznelis2013, BloznelisDamarackas2013, BloznelisGJKR2015-2, Deijfen}. 
More importantly, in this range
these random graph models admit tunable global clustering coefficient $C_G\approx \alpha_C$,
provided that the degree variance is finite \cite{kurauskas2015}.
Therefore it is reasonable to consider the range $n=\Theta(m)$, also when
studying the global clustering coefficient
of  a power law intersection graph with an infinite degree variance.

\subsection{Results}
Let $d(v_i)$ denote the degree of 
a vertex $v_i\in V=\{v_1,\dots, v_n\}$ 
in a random intersection graph. 
We note that the random variables
$d(v_1),\dots, d(v_n)$ are identically distributed for each particular model:
active, passive and 
inhomogeneous. When speaking about the asymptotic degree distribution below 
we think about the limit in distribution of the random variable $d(v_1)$ 
as $n,m\to+\infty$.

{\it Active graph $G=G(n,m,P)$.}  In Theorem
 \ref{T1-2} below we show that an active graph with an infinite degree
 variance has the global clustering coefficient $C_{G}\approx 0$.

\begin{tm}\label{T1-2} Let $\beta>0$. 
Let $m,n\to+\infty$. Assume that $m/n\to \beta$.
 Let $Z$ be a non-negative random variable such that $\E Z<\infty$ and 
 $\E Z^2=\infty$.
Let $P$ denote the distribution of $\min\{Z,m\}$.
The global clustering coefficient of the active random graph 
$G(n,m,P)$ 
satisfies $C_G=o_P(1)$.
\end{tm}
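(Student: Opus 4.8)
I would write $C_G=3\Delta/\Lambda$, where $\Delta$ is the number of triangles of $G$ and $\Lambda=\sum_{v\in V}\binom{d(v)}{2}$ the number of paths of length two, and aim to show $\Delta=o_P(\Lambda)$; since $C_G\in[0,1]$ this gives $C_G=o_P(1)$. So I bound $\Delta$ from above and $\Lambda$ from below on events of probability tending to $1$.

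\emph{The numerator, first part.} I split the triangles of $G$ into those whose three vertices have a common neighbour in $H$ and the rest. For the first kind, of total number $\Delta_a$, one has $\Delta_a\le\sum_{w\in W}\binom{N_w}{3}$, where $N_w=\#\{v:w\in S_v\}$ has distribution $\Bin(n,q_m)$ with $q_m=\E[\min\{Z,m\}]/m\le \E Z/m$. Since $\E\binom{N_w}{3}=\binom n3 q_m^{3}\le\tfrac16(\E Z)^{3}n^{3}m^{-3}$, summing over $w$ and using $m=\Theta(n)$ yields $\E\Delta_a=O(n)$, hence $\Delta_a=O_P(n)$ by Markov's inequality. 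Each triangle of the second kind (there are $\Delta_b$ of them) necessarily uses three pairwise distinct witnesses; this part needs more work, see below.

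\emph{The denominator.} Write $\ell_t:=\E(\min\{Z,t\})^{2}$; then $\ell_t\uparrow\infty$ as $t\to\infty$ (because $\E Z^{2}=\infty$), while $\ell_m=o(m)$ (because $\E Z<\infty$). Fix a truncation level $h=h_n\to\infty$ with $h=o(n/\ell_m)$. For a vertex $v$ with $|S_v|\le h$ the expected number of its neighbours, $|S_v|(n-1)q_m$, is of order $|S_v|$, whereas the expected over-counting (vertices sharing two or more attributes with $v$) is $O(|S_v|^{2}\ell_m/m)=o(|S_v|)$; a Chernoff estimate then gives $d(v)\ge c|S_v|$ with probability tending to $1$ whenever $|S_v|$ lies in a range $[\rho_n,h]$ with $\rho_n\to\infty$, $\rho_n=o(h)$. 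Consequently $\Lambda\ge c^{2}\sum_{v:\,\rho_n\le|S_v|\le h}|S_v|^{2}$, and the right-hand side, a sum of independent bounded summands, concentrates (second moment) around its mean, which is of order $n\,\E[Z^{2};\rho_n\le Z\le h]$ and, after choosing $\rho_n$ small enough, of order $n\,\E[Z^{2};Z\le h]\to\infty$. In particular $\Lambda=\omega_P(n)$, which already absorbs $\Delta_a$.

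\emph{The crux: $\Delta_b=o_P(\Lambda)$.} A type-$b$ triangle $\{u,v,x\}$ realises the three intersections $S_u\cap S_v$, $S_v\cap S_x$, $S_x\cap S_u$ by pairwise distinct witnesses; conditioning on the attribute-set sizes and using that two vertices are adjacent with probability $O(1/m)$, one can bound, for a fixed apex $v$, the number of type-$b$ triangles through $v$ in terms of $|S_v|$, $\sum_u|S_u|$ and $\sum_u|S_u|^{2}$, and summation over $v$ gives a first-moment estimate of order $(\ell_m)^{3}$. This is not good enough when $\ell_m$ grows fast: the estimate is inflated by the rare event that two or three of $u,v,x$ have an atypically large attribute set — such a configuration produces at most one triangle but is charged once per admissible witness triple by the union bound. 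The remedy is again truncation: discard the $O_P(n/h)$ heavy vertices ($|S_v|>h$); bound the type-$b$ triangles with two or three heavy vertices crudely, using that few vertices can be adjacent to two fixed heavy ones; and bound those with at most one heavy vertex by the conditional first moment, in which two of the surviving size factors are now capped at $h$, so that the heavy-tailed contribution disappears. Balancing $h$ — perhaps after a dyadic split of the vertices by attribute-set size — against the lower bound $n\,\E[Z^{2};Z\le h]$ on $\Lambda$ then yields $\Delta_b=o_P(\Lambda)$, and hence $C_G=o_P(1)$. I expect this last balancing, i.e. choosing the truncation so that \emph{both} the discarded heavy part of $\Delta_b$ is negligible \emph{and} the surviving lower bound on $\Lambda$ still dominates it, to be the main obstacle, presumably handled by the technical lemmas of Section~3.
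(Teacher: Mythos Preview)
Your decomposition $\Delta=\Delta_a+\Delta_b$ and the bound $\Delta_a=O_P(n)$ match the paper. The gap is in your handling of $\Delta_b$, and it is real. The claim that ``few vertices can be adjacent to two fixed heavy ones'' is false: if $|S_u|,|S_v|$ are of the maximal order $n^{1/\alpha}$ (power-law tail index $\alpha\in(1,2)$), the expected number of their common neighbours is of order $n^{4/\alpha-2}\to\infty$. More fundamentally, a conditional first-moment bound on $\Delta_b$ --- even after truncation --- cannot succeed when $S_Z:=\sum_iZ_i^2$ is much larger than $n^{3/2}$: one has $\E_Z\Delta_b\le n^{-3}S_Z^3$ up to constants, and for $\alpha<4/3$ this already exceeds $S_Z$. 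Your lower bound $\Lambda\gtrsim n\,\E[Z^2;Z\le h]$ with $h=o(n/\ell_m)$ is correspondingly too weak to absorb the heavy-vertex triangles; balancing $h$ as you suggest does not close the gap in this regime.

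The paper resolves this with a different idea (borrowed from Ostroumova--Samosvat): the number of triangles incident to any single vertex $v_i$ is at most the total edge count $\mathcal E$ of $G$, since each such triangle uses an edge of $G-v_i$; and $\mathcal E=O_P(n)$ because $\E Z<\infty$. Splitting vertices at a threshold near $\sqrt n$ and using $\Delta_i\le\binom{d_i}{2}$ below it, $\Delta_i\le\mathcal E$ above it, yields an \emph{absolute} bound $\Delta\le n^{3/2}\varkappa_n$ with high probability for some $\varkappa_n\downarrow 0$, valid no matter how large $S_Z$ is. The paper also proves the sharper lower bound $\Lambda\ge c^\ast S_Z$ (not merely $\Lambda=\omega_P(n)$), treating vertices with $Z_i\le\ln^2 n$ by a second-moment computation and those with $Z_i>\ln^2 n$ by a Chernoff bound on their degree into a fixed reference set $R=\{i:a<Z_i<b\}$. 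The proof concludes by a dichotomy on the event $\{S_Z\le n^{3/2}\sqrt{\varkappa_n}\}$: on it, the first-moment bound $\Delta=O_P(n+n^{-3}S_Z^3)$ suffices; off it, the absolute bound gives $\Delta/\Lambda=O_P(\sqrt{\varkappa_n})$. The $\Delta_i\le\mathcal E$ trick is the missing ingredient in your plan.
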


Under conditions of Theorem \ref{T1-2} the active graph  
has a mixed Poisson asymptotic degree distribution 
assigning probabilities $\E e^{-\lambda}\frac{\lambda^k}{k!}$
 to the integers $k=0,1,\dots$, see \cite{Bloznelis2013}. 
Here $\lambda=(\E Z)\beta^{-1} Z$ is a random variable. 
In the case where  $Z$  
has a power law  with the tail index $\alpha>1$, i.e., for some
$c_z>0$ we have
\begin{equation}\label{Zalpha}
\PP(Z>t)=c_zt^{-\alpha}+o(t^{-\alpha})
\qquad
{\text{as}}
\qquad
t\to+\infty,
\end{equation}
the asymptotic degree distribution described above is a power law 
 with the same
tail index $\alpha$. For $1<\alpha\le 2$  it has a finite first moment,
infinite variance and the clustering coefficient $C_G\approx 0$.

 \medskip

\medskip

{\it Passive graph $G^{\star}=G^{\star}(n,m,\PPP^{\star})$.} In Theorem
 \ref{T1} below we show that a passive graph with an infinite degree
 variance has the global clustering coefficient $C_{G^{\star}}\approx 1$. 
By $X$ we denote a random variable with the distribution $\PPP^{\star}$.

\begin{tm}\label{T1} Let $\beta>0$.  Let $m,n\to\infty$. 
Assume that $mn^{-1}\to \beta$ and

(i) $X$ converges in distribution to a random variable $Z$;

(ii)  $\E Z^2<\infty$ and \ $\lim_{m,n\to\infty}\E X^2=\E Z^2$;

(iii) $\E Z^3=\infty$.

Then the clustering coefficient  $C_{G^{\star}}= 1-o_P(1)$.

\end{tm}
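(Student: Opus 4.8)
The plan is to show that the number of triangles $\Delta$ and the number of paths of length $2$ (``cherries''), call it $\Lambda$, in $G^\star$ are both dominated, asymptotically, by the contribution of a \emph{single} attribute $w\in W$ of large degree $D_w$: such an attribute contributes $\binom{D_w}{3}$ triangles and $D_w\binom{D_w-1}{2}=3\binom{D_w}{3}$ cherries, so if the bulk of both statistics comes from single attributes then $3\Delta/\Lambda\to 1$. Since $\E X^3=\infty$ and $m=\Theta(n)$, the maximal attribute degree $\max_w D_w$ grows faster than $n^{1/3}$, so $\binom{\max_w D_w}{3}$ is of larger order than what one gets from ``typical'' configurations; this is the mechanism behind $C_{G^\star}\approx 1$.

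First I would set up the decomposition $\Lambda = \Lambda_1 + \Lambda_{\ge 2}$ and $\Delta = \Delta_1+\Delta_{\ge 2}$, where the subscript records whether all the edges of the path/triangle are witnessed by one common attribute or whether at least two attributes are involved. For a path $v_iv_jv_k$ with centre $v_j$: the edges $v_iv_j$ and $v_jv_k$ may share a witness or not. Write $\Lambda_1=\sum_{w}D_w\binom{D_w-1}{2}$ (up to the minor correction for paths double-counted because both edges have two or more common witnesses — handled by inclusion–exclusion and absorbed into error terms), and note $3\Delta_1=\Lambda_1$ exactly if we define $\Delta_1$ as triples covered by a single attribute. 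So it suffices to prove (a) $\Lambda_1/\Lambda \to 1$ in probability, equivalently $\Lambda_{\ge 2}=o_P(\Lambda_1)$, and (b) $\Delta_{\ge 2}=o_P(\Delta_1)$ together with the book-keeping that $3\Delta_1=\Lambda_1+o_P(\Lambda_1)$.

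The core estimates are then: a lower bound $\Lambda_1\ge \binom{\max_w D_w}{3}\cdot 6 \gg n$, more precisely $\Lambda_1$ is of order at least $n^{3/\alpha'}$ for the relevant tail, using that $\E X^3=\infty$ forces $\sum_w D_w^3$ to exceed any multiple of $n$ with high probability (a second-moment / truncation argument on the i.i.d.\ $D_w$, whose means are $\Theta(n/m)=\Theta(1)$ under (i)–(ii)); and an upper bound on $\E\Lambda_{\ge 2}$ and $\E\Delta_{\ge 2}$. A path with two distinct witnesses $w,w'$ needs $v_j\in D_w\cap D_{w'}$, $v_i\in D_w$, $v_k\in D_{w'}$; summing over ordered pairs of attributes and using independence, $\E\Lambda_{\ge 2}=O\!\big(m^2 \cdot (\E|D_w|^2/n)\cdot(\E|D_w|)^2\big)=O(n^2)$ crudely, but the honest bound, exploiting $\E X^2=\E Z^2+o(1)<\infty$, is $O(n^2/n)=O(n)$ for the ``two-witness'' part after accounting for the $1/n$ probability that a fixed vertex lies in $D_w$. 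A triangle with all three edges from distinct witnesses contributes $O(n^{?})$ — one checks $\E\Delta_{\ge 2}=O(n)$ as well. Comparing with $\Lambda_1\gg n$ (strictly super-linear because of the infinite third moment) gives the ratios $\to 0$. I would also need a matching \emph{upper} concentration statement for $\Lambda$ (e.g.\ that $\Lambda$ is not wildly larger than $\Lambda_1$), but that is immediate since $\Lambda_{\ge 2}$ is small in expectation, hence small in probability by Markov.

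The main obstacle I expect is proving the super-linear lower bound on $\Lambda_1=\sum_w D_w^3 + \text{l.o.t.}$ \emph{in probability} from the mere hypothesis $\E Z^3=\infty$ (with only $\E Z^2<\infty$ and convergence of second moments available). One cannot use a law of large numbers directly because $\E D_w^3=\infty$; instead I would truncate at a slowly growing level $t_n\to\infty$, show $\E[\min\{D_w,t_n\}^3]\to\infty$, apply Chebyshev to $\sum_w \min\{D_w,t_n\}^3$ (whose variance is controlled by $\E[\min\{D_w,t_n\}^6]\le t_n^3\,\E[\min\{D_w,t_n\}^3]$, so the ratio variance/mean$^2$ is $O(t_n^3/(m\,\E[\min\{D_w,t_n\}^3]))\to 0$ if $t_n$ grows slowly enough), and conclude $\Lambda_1\ge \sum_w\min\{D_w,t_n\}^3 = \omega(n)$ w.h.p. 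Making the choice of $t_n$ uniform in the (unknown) rate at which $\E X^3$ diverges, and simultaneously keeping the two-witness error genuinely $o(\Lambda_1)$, is the delicate bookkeeping step; everything else is a routine first/second-moment computation.
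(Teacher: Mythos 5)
Your plan is essentially the paper's own proof: there, too, triangles and $2$-paths are split into those witnessed by a single attribute (counted by $N=\sum_{w}\binom{|D_w|}{3}$, with $\Lambda_L=3\Delta_L$) and the rest, the main term $6N=(1+o_P(1))\sum_w|D_w|^3$ is shown to be superlinear by exactly your truncation-plus-Chebyshev argument (Lemma \ref{L1}), and the multi-witness terms are killed by first-moment bounds exploiting $\E X^2\to\E Z^2<\infty$ (the paper conditions on the sizes and uses Lemmas \ref{L2} and \ref{L7}, you use unconditional expectations and Markov --- a cosmetic difference). The one term you only gesture at, the inclusion--exclusion correction $\sum_{\{w,\tau\}}\binom{|D_w\cap D_\tau|}{3}$, is not covered by your second-moment computations alone but follows routinely either from $\E X^3\le n\,\E X^2=O(n)$ or from truncating $\max_i|D_{w_i}|$ as in the paper's proof of (\ref{L2-2}).
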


We mention that under conditions of Theorem \ref{T1}, the degree 
$d(v_1)$ converges in distribution to the compound Poisson random variable 
$d_*=\sum_{j=1}^{\zeta} {\tilde Z}_j$, see \cite{Bloznelis2013}.
Here ${\tilde Z}_1,{\tilde Z}_2,\dots$ are independent random variables with 
the common probability distribution $\PP({\tilde Z}_1=r)=(r+1)\PP(Z=r+1)/\E Z$, 
$r=0,1,2,\dots$.
 The random variable $\zeta$ 
is independent of the sequence 
${\tilde Z}_1,{\tilde Z}_2,\dots$ and has Poisson distribution 
with mean $\E \zeta= \beta^{-1}\E Z$.
 Assuming that for some $\alpha\in(3,4)$ and $c>0$ 
\begin{equation}\label{Passive-a}
 \PP(Z=r)=cr^{-\alpha}(1+o(1)) 
\qquad
{\text{as}}
\qquad
r\to+\infty,
\end{equation}
 we obtain, by Theorem 4.30 of 
\cite{FossKZ2013}, that
\begin{displaymath}
 \PP(d_*=r)
=
\PP({\tilde Z}_1=r)(\E \zeta)(1+o(1))
= 
c'\beta^{-1}r^{1-\alpha}(1+o(1))
\qquad
{\text{as}}
\qquad
r\to+\infty,
\end{displaymath}
for some constant $c'>0$. In this case 
$G^*$ has asymptotic power law degree distribution 
with a finire first moment, infinite variance
and the clustering coefficient $C_{G^*}\approx 1$.

\medskip

{\it Inhomogeneous graph $G(n,m,\PPP_X, \PPP_Y)$.} 
In Theorem \ref{T2} below we show that
the global clustering coefficient of an inhomogeneous graph with an infinite degree variance is highly determined by the ratio of the random variables
\begin{displaymath}
S_X=\sum_{i=1}^mX_i^3
\qquad
{\text{and}}
\qquad
S_Y=\sum_{j=1}^nY_j^2.
\end{displaymath}
We denote
 $a_i=\E X_1^i$, $i=1,2$, and $b_1=\E Y_1$.

\begin{tm}\label{T2} Let $\beta>0$.  Let $m,n\to\infty$. 
Assume that $mn^{-1}\to \beta$. 
Suppose that 
$\E X_1^2<\infty$, $\E X_1^3=\infty, \E Y_1<\infty, 
\E Y_1^2=\infty$. 
Denote $\kappa=\beta^{3/2}a_2^2b_1^{-1}$. We have
$C_G=\bigl(1+\kappa S_Y/S_X\bigr)^{-1}~{+o_P(1)}$.
\end{tm}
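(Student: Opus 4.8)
The plan is to analyse the numerator $\Delta$ (number of triangles) and the denominator $\Lambda$ (number of paths of length $2$) of $C_G=3\Delta/\Lambda$ separately, and to show that each, suitably normalised, concentrates around an explicit expression in $S_X$ and $S_Y$ after conditioning on the weights $X_1,\dots,X_m$ and $Y_1,\dots,Y_n$. Since $\E X_1^2<\infty$, $\E Y_1<\infty$, the sums $\sum_i X_i^2$, $\sum_j Y_j$ obey a law of large numbers and are $(1+o_P(1))m a_2$ and $(1+o_P(1))n b_1$ respectively; by contrast $S_X=\sum_i X_i^3$ and $S_Y=\sum_j Y_j^2$ are of strictly larger order (they are dominated by a few heavy summands, being in the domain of attraction of a stable law), so the answer is driven by the random ratio $S_Y/S_X$. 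First I would fix the bipartite graph $H$ and compute, conditionally on the weights, the probability that three given vertices form a triangle or a cherry. A triangle in $G$ arises in two essentially different ways: (a) all three adjacencies are witnessed by a single common attribute $w_i$ (the three edges share one witness), or (b) the three edges are witnessed by (at least) two or three distinct attributes forming a ``triangle of witnesses''. Type (a) contributes, per ordered triple, roughly $\sum_i p_{i1}p_{i2}p_{i3}\approx \sum_i (X_i^3/(mn)^{3/2})Y_1Y_2Y_3$, whose expectation over the vertex weights and sampling scales like $m\cdot \E X^3\cdot b_1^3/(mn)^{3/2}$ — but $\E X^3=\infty$, which is precisely why we must keep $S_X$ un-averaged. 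Type (b) contributes a product of three independent pair-probabilities and scales like $(\sum_i X_i^2/(mn))^3 (\E Y^2)^3$, which involves $\E Y^2=\infty$, so we keep $S_Y$ un-averaged there. The cherry count $\Lambda$ is dominated by paths $v_1-v_3-v_2$ whose two edges share a single witness, contributing $\sum_i p_{i1}p_{i3}\cdot\sum_k p_{k2}p_{k3}$-type terms; one also checks that cherries with distinct witnesses are of lower order.

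The key steps, in order, are: (1) write $\Delta$ and $\Lambda$ as sums over triples of indicator variables, decomposed according to the witness structure as above; (2) compute conditional expectations given the weight vectors $(X_i),(Y_j)$, obtaining that $\E[\Delta\mid X,Y]$ is asymptotically a sum of a ``single-witness'' term proportional to $S_X\cdot(\text{l.l.n.\ factors})$ and a ``three-witness'' term proportional to $(\sum_i X_i^2)^3\cdot(\text{something}) $, and similarly for $\Lambda$ that its conditional expectation is dominated by the single-witness term and is proportional to $S_Y\cdot(\text{l.l.n.\ factors})$ plus a term proportional to $(\sum_i X_i^2)^2$; (3) plug in the laws of large numbers $\sum_i X_i^2\sim m a_2$, $\sum_j Y_j\sim n b_1$, $n\sim\beta^{-1}m$, and identify the surviving terms — the outcome should be $3\Delta\approx c_1 S_X + c_2$ and $\Lambda\approx c_3 S_Y + c_4$ for explicit powers of $m$ in $c_1,c_3$ and lower-order $c_2,c_4$, with the ratio of leading constants producing exactly the $\kappa=\beta^{3/2}a_2^2 b_1^{-1}$ of the statement once one writes $C_G=3\Delta/\Lambda=(S_X/(\kappa S_Y))(1+o_P(1))/(1+o_P(1))$... wait, more precisely $C_G=(1+\kappa S_Y/S_X)^{-1}+o_P(1)$, which means the triangle count splits as (single-witness part giving $S_X$) plus (part matched to the cherries giving $\kappa S_Y$), so I must be careful that every cherry is ``almost surely'' closed into a triangle except for a fraction governed by whether its two witnesses coincide — i.e. the natural identity is $3\Delta = (\text{cherries with a common witness that is also the third edge's witness}) + \dots$, and the $-o_P(1)$ correction comes from cherries whose two edges are witnessed by distinct attributes but which nonetheless fail to close; (4) prove concentration: show $\Delta$ and $\Lambda$ are concentrated around their conditional means given the weights, via a second-moment / variance bound (the heavy-tailed summands need care — split each sum at a truncation level $\varepsilon (mn)^{1/2}$ or $\varepsilon n^{1/2}$, bound the contribution of the light part by Chebyshev and the heavy part by a direct first-moment/union bound, a standard technique for sums in the domain of attraction of a stable law); (5) combine via Slutsky, using that $S_X/m^{3/2}$ and $S_Y/n$ are both tight and bounded away from $0$ in probability (each is a sum of i.i.d.\ regularly varying terms, hence converges to a positive stable random variable), so the ratio $S_Y/S_X$ is $\Theta_P(1)$ and the $o_P(1)$ error terms in numerator and denominator translate into a single additive $o_P(1)$ in $C_G$.

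The main obstacle I anticipate is step (4), the concentration, because $\Delta$ and $\Lambda$ are sums of weakly dependent indicators whose conditional means are themselves governed by heavy-tailed weights: the usual variance bound has terms involving $\E X_1^4$, $\E X_1^6$ or $\E Y_1^4$ which are infinite, so one cannot simply bound $\Var(\Delta\mid X,Y)$ crudely. The resolution is a truncation argument combined with the observation that, after conditioning on the weights, $\Delta$ and $\Lambda$ are (mixed) sums of independent or negatively correlated Bernoulli contributions across the $m$ attributes, so conditional variances are controlled by conditional means; then one shows the rare large deviations caused by a single huge weight $X_i$ or $Y_j$ are exactly the events already captured by $S_X$, $S_Y$ themselves and do not produce additional fluctuation in the \emph{ratio}. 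A secondary technical point is verifying that ``double-witness'' and ``three-witness'' triangle configurations (and multi-witness cherries) really are lower order than the terms retained — this is a routine but slightly tedious counting estimate, using $p_{ij}\le X_iY_j/\sqrt{mn}$ and the finiteness of $a_2$, $b_1$, together with $m=\Theta(n)$. With these two points handled, assembling the named constant $\kappa$ and the final formula is bookkeeping. I would defer the heaviest of these estimates to the technical lemmas promised for Section 3.
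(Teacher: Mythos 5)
Your overall strategy (condition on the weights, classify triangles and cherries by their witness structure, compute conditional means, then prove concentration with a truncation) is the right one and parallels the paper, but your sketch gets the central structural point backwards, and that point is exactly where the formula $\bigl(1+\kappa S_Y/S_X\bigr)^{-1}$ comes from. In this model the triangles are essentially \emph{all} single-witness: $3\Delta=3N_1+o_P(S_X)+o_P(S_Y)$ with $3N_1=\tfrac12\beta^{-3/2}b_1^3S_X(1+o_P(1))$, where $N_1$ counts attribute stars covering three vertices. The cherries, by contrast, split into the single-witness ones (which are just $3\Delta_L$, again of order $S_X$) \emph{plus} the distinct-witness cherries, whose count $N_2=\tfrac12 a_2^2b_1^2S_Y(1+o_P(1))$ is proportional to $S_Y$ (the central vertex weight enters squared) and which in general do \emph{not} close; these are not lower order — they are precisely the $\kappa S_Y$ term in the denominator. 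Your text asserts the opposite twice: first that ``cherries with distinct witnesses are of lower order'' (and the expression $\sum_i p_{i1}p_{i3}\cdot\sum_k p_{k2}p_{k3}$ you attach to the dominant cherries is in fact the distinct-witness term), and later, after noticing the mismatch with the claimed formula, that almost every cherry closes and that the $S_X$/$\kappa S_Y$ split lives in the \emph{triangle} count, with unclosed distinct-witness cherries only an $o_P(1)$ correction. With either of those structures you would derive $C_{G}=1-o_P(1)$ or a ratio of order $S_X/S_Y$, not the stated result; so as written the plan would not produce the theorem until this bookkeeping is corrected to $\Lambda\approx 3N_1+N_2$, $3\Delta\approx 3N_1$.

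A second genuine gap is the claim that the multi-witness configurations are ``routine'' to dismiss using only $p_{xw}\le X_wY_x/\sqrt{nm}$ and finiteness of $a_2,b_1$. Under the sole assumption $\E Y_1<\infty$ the crude bound for three-witness triangles is of order ${\hat a}_2^3S_Y^3/n^3$, and since $S_Y$ may be close to $n^2$ (take $\PP(Y_1>t)\sim c\,t^{-1}\log^{-2}t$), this need not be $o_P(S_Y)$; the theorem does not assume regular variation, so you cannot lean on a power-law exponent here. The paper handles this by splitting edges into heavy ($Y_xY_y>\varepsilon_n n$) and light, bounding light unlucky triangles directly and heavy unlucky triangles by half the number of heavy unlucky paths, and it also needs the auxiliary subgraphs $H_3,H_4,H_5$ to control duplicate witness structures when passing from the subgraph counts $N_1,N_2$ to $\Delta_L,\Lambda_U,\Lambda_{LU}$ — an overcounting issue your sketch does not address. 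Relatedly, your step (5) invokes tightness of $S_Xm^{-3/2}$ and $S_Yn^{-1}$ via stable limits, which is neither implied by the hypotheses nor needed: the paper only uses the superlinearity $m=o_P(S_X)$, $n=o_P(S_Y)$ together with additive error terms $o_P(S_X)+o_P(S_Y)$, which already yield $C_G=\bigl(1+\kappa S_Y/S_X\bigr)^{-1}+o_P(1)$ without any distributional limit for $S_Y/S_X$. The concentration step you flag (Hoeffding-type decompositions of $N_1$ and $N_2$ given the weights, with variance $o_P(S_X^2)$, resp.\ $o_P(S_Y^2)$) is indeed where the heavy work sits, and your truncation idea is in the right spirit, but it remains a sketch.
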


In the case where $S_X$ and $S_Y$ grow to infinity at the same rate we can 
obtain a non-trivial limit of $C_G$. The next remark addresses the case
where the distributions of $X_1^3$ and $Y_1^2$ belong to the domain of attraction 
of stable distributions having the same characteristic exponent $\alpha\le 1$.

\begin{rem}\label{R1}
 Let  $\alpha,\beta>0$.  Let $m,n\to\infty$. 
Assume that $mn^{-1}\to \beta$.
Suppose that for some  $c_x,c_y>0$ we have
\begin{equation}\label{tailsXY}
 \PP(X_1>t)=c_xt^{-3\alpha}+o(t^{-3\alpha}),
\qquad
\PP(Y_1>t)=c_yt^{-2\alpha}+o(t^{-2\alpha})
\qquad
{\text{as}}
\qquad
t\to+\infty. 
\end{equation}

(i) For
$0<\alpha<1$ the ratio $S_Y/S_X$ {\it converges in distribution} to the random variable
$c^*Z_{\alpha}/Z'_{\alpha}$, where $Z_{\alpha},Z'_{\alpha}$ 
are independent stable random variables 
with the Laplace transform 
$\E e^{-sZ_{\alpha}}=\E e^{-sZ'_{\alpha}}=e^{-s^{\alpha}}$
and $c^*=(c_y/(c_x\beta))^{1/\alpha}$.

(ii) For $\alpha=1$ the ratio $S_Y/S_X=c_y(c_x\beta)^{-1}+o_P(1)$.
\end{rem}

Let us apply 
Theorem \ref{T2}  to power law random weights (\ref{tailsXY}). 
We observe that 
$\E X_1^2,\E Y_1<\infty$ and $\E X_1^3, \E Y_1^2=\infty$
imply $2/3<\alpha\le 1$.
  For $\alpha =1$ the result  of Theorem \ref{T2}
implies that 
$C_G\approx \bigl(1+\kappa c_y/(c_x\beta)\bigr)^{-1}$ 
is asymptotically constant. For
$2/3<\alpha<1$ it implies that  $C_G$ 
{\it converges in distribution} to the random variable 
$
\bigl(
1+\kappa(c_y/(c_x\beta))^{1/\alpha}Z_{\alpha}/Z'_{\alpha}
\bigr)^{-1}$.

\bigskip

Finally, we mention that for $m/n\to\beta\in (0,+\infty)$  and 
$2/3< \alpha\le 1$, the 
inhomogeneous graph defined by power law weights (\ref{tailsXY})
has a power law 
 asymptotic degree distribution with the tail index 
$3\alpha-1$, see \cite{BloznelisDamarackas2013}. 
In particular,  the asymptotic degree distribution has 
a finite first moment and an infinite variance.

\subsection{Discussion}

One motivation of our study was the 
recent paper \cite{OstroumovaSamosvat2014},
which claims that 
``if the degree distribution obeys the power
law with an infinite variance, then the global clustering coefficient tends
to zero with high probability as the size of a graph grows.''
This may look a bit confusing in view of the fact that some 
large social networks with quite substantial 
 global clustering coefficients are believed to have a power law degree distribution with an infinite variance.
 The present study could be viewed 
 as an attempt to resolve this seemingly contradiction with the aid of 
 a known theoretical model of an affiliation network. 
 
 We observe that random intersection graphs considered in this paper admit {\it asymptotic}
 power law degree distributions, but their degree sequence is not an iid sample from a power law. We mention that some real affiliation networks
 are believed to have a power law degree sequence, but with an exponential cutoff, \cite{Newman2001aCUTOFF}, \cite{Newman2001bCUTOFF}, \cite{Redner1998}.

 In what follows we discuss the relation between the result of 
 \cite{OstroumovaSamosvat2014} and our Theorems  \ref{T1-2}, \ref{T1}, 
 \ref{T2} in some 
 detail.  To this aim we briefly recall the 
 argument of \cite{OstroumovaSamosvat2014}.
 We call a path $x\sim y\sim z$ a cherry produced by vertex $y$. For example, 
 a vertex $v_j$ of degree $d_j=d(v_j)$ produces $\binom{d_j}{2}$ cherries.
Ostroumova and Samosvat \cite{OstroumovaSamosvat2014} observed
that cherries produced by vertices of large degrees highly outnumber
the triangles of the graph. 
Indeed, among the iid degrees  $d_1,\dots, d_n$
obeying a power law with the tail index
 $1<\alpha <2$, the largest few roughly scale as $n^{1/\alpha}$.
Consequently, the number of cherries produced by 
the largest vertices roughly scale as $n^{2/\alpha}$.
 On the other hand, the number of triangles
  incident to any  vertex 
 $v_j$ does not exceed the number of cherries $\binom{d_j}{2}$. More 
 importantly, this number is bounded by the total number of edges of 
 the graph (edges needed to close  cherries produced by $v_j$). 
But for $1<\alpha$ the 
 average 
 degree is bounded and the
 total number of edges scales as $n$. This implies that only a negligible 
 fraction $n^{1-(2/\alpha)}$ of cherries produced by the largest vertices 
 are closed. Putting things together one can show that
 $3\Delta\le c\sum_j n\wedge{\binom{d_j}{2}}$ is negligible compared to
 $\Lambda =\sum_{j}\binom{d_j}{2}$. Hence $C_G=o_P(1)$.

 In a random intersection graph $G$ the triadic closure of a cherry is 
 explained by a common attribute shared by all three vertices of the cherry (triangles whose edges are witnessed by distinct attributes are rare and can be neglected).
 We exploit this clustering mechanism while evaluating the global 
 clustering coefficient $C_G$: When counting triangles we focus on cliques 
 of  $G$ induced by the neighbourhoods $D_i=D_{w_i}\subset V$ of attributes 
 $w_i\in W$ in the underlying bipartite graph $H$. Every 
 set
 $D_i$ of size ${\tilde X}_i:=|D_i|$ covers $\binom{{\tilde X}_i}{3}$ 
 triangles of $G$ and 
 the total number of triangles
obtained in this way scales as 
${\tilde S}_X=\sum_{i}\binom{{\tilde X}_i}{3}$ (overlaps can be neglected). 
In fact, this number dominates the 
total number of triangles in each of random intersection graphs considered 
in Theorems \ref{T1-2}--\ref{T2}.

In the active graph (with bounded average degree) the random variables 
${\tilde X}_i$ have the same asymptotic Poisson distribution. Hence 
${\tilde S}_X$ scales as $m$.
Furthermore, the degrees $\{d_j\}$ of  vertices $\{v_j\}$ can be 
approximated by asymptotically independent Poisson random variables having 
means $\lambda_j=Z_j\beta^{-1}\E Z_j$. Here $Z_1,\dots, Z_n$ are iid  
copies of  $Z$. Hence $\Lambda=\sum_j\binom{d_j}{2}$ 
scales as $\Theta(S_Z)$, where 
$S_Z=\sum_j Z_j^2$. 
For $\E Z^2=\infty$ the sum $S_Z$ is super-linear in $n$ and for $n=\Theta(m)$
we obtain ${\tilde S}_X/S_Z=o_P(1)$. Thus $C_G=o_P(1)$. We note that 
similarly to the case of iid degrees considered in 
\cite{OstroumovaSamosvat2014}
the number of cherries 
of active intersection graph 
scales as a sum of iid random variables having an infinite mean. 
One difference from  \cite{OstroumovaSamosvat2014} is that in our
Theorem \ref{T1-2} we have relaxed the structural 
"power law degree" condition of \cite{OstroumovaSamosvat2014}. 

The passive graph is a union of independently located cliques 
induced by the sets $D_{w_i}\subset V$, $w_i\in W$. 
Since $|D_{w_i}|={\tilde X}_i$ converges in distribution to  a random variable having infinite third moment,  we have that
${\tilde S}_X$ is super-linear in $m$. Furthermore, we show that $\Lambda$ is dominated by the number of cherries covered by the
cliques. This number scales as $3\sum_i\binom{X_i}{3}=3{\tilde S}_X$
(we neglect overlaps again). 
Hence,  $C_G^*=1+o_P(1)$.

The inhomogeneous graph interpolates between the active and passive graphs.
The number of triangles ${\tilde S}_X$  scales as $\Theta(S_X)$ as in the 
passive graph, while  $\Lambda$ is approximately the sum 
of the number of cherries covered by large cliques (as in the passive 
graph) and  the number of cherries produced by the largest vertices (as 
in the active graph). These numbers scale as $3{\tilde S}_X$ and 
$\Theta(S_Y)$ respectively. 
In this way we obtain the approximation $C_G\approx 
(1+\Theta(S_Y/S_X))^{-1}$.
 Finally, we note that   the inhomogeneous graph is a  fitness model of 
 a real 
 affiliation network, where activity of  vertices is modeled by the 
 distribution  $P_Y$ and  attractiveness of  attributes is modeled 
 by the distribution  $P_X$. We summarize the result of 
Theorem \ref{T2}  as follows:
The global clustering coefficient is non-vanishing whenever the 
attractiveness "outweighs" the activity.

\section{Proofs}

We begin by establishing some notation. 
Detailed proofs are given afterwards.

{\it Notation.}
By $\E_{\mathbb X}$ 
and $\PP_{\mathbb X}$ (respectively ${\tilde \E}$ and ${\tilde \PP}$) we
denote the conditional expectation and conditional probability given 
${\mathbb X}=(X_1,\dots, X_m)$ 
(respectively ${\mathbb X}$ and ${\mathbb Y}=(Y_1,\dots, Y_n)$). 
We use the notation $[k]$ for the set $\{1,2,\dots, k\}$ and 
the shorthand notation
  $\sum_{\Lambda}$ for  the double sum 
$\sum_{x\in V}\sum_{\{y,z\}\subset V\setminus\{x\}}$.
Denote 
empirical means ${\hat a}_r=m^{-1}\sum_{i\in [m]}X_i^r$ and 
${\hat b}_r=n^{-1}\sum_{j\in [n]}Y_j^r$. 

Let $G$ be the intersection graph defined by a bipartite graph $H$ 
with the bipartition $V\cup W$. 
For $x,y\in V$ and $w\in W$ we denote by ${\mathbb I}_{x\sim y}$
and ${\mathbb I}_{xw}$ the indicators of the events that $x,y$ are
adjacent in $G$ and $x,w$ are adjacent in $H$. For $v=v_j\in V$ 
and $w=w_i\in W$ we write interchangeably $Y_j$ or $Y_v$ and $X_i$ or $X_w$
also  $p_{ij}$ or $p_{wv}$. 
For $v\in V$ and $w\in W$ we denote $\lambda_{vw}=Y_vX_w(nm)^{-1/2}$.

For $w\in W$, let $D_w\subset V$ denote the set of neighbours of $w$ in $H$.
Note that each $D_w$ induces a clique in 
$G$.
Given a subgraph $G'\subset G$ and a subset $W'\subset W$
we say that the collection of sets $\{D_w,\, w\in W'\}$
is a 
cover of $G'$ 
 if every edge of $G'$ is witnessed by some
$w\in W'$ and for every $w\in W'$ there is an edge in $G'$ having no other witness
from $W'$, but $w$ (any proper subset of $W'$ can't be a cover of $G'$).

A subgraph of $G$ is labeled  ``lucky'' if it has a cover consisting of a 
single set
$D_w$, for some $w\in W$.
A subgraph is labeled ``unlucky'' if it has a cover consisting of two or more sets.
We note that a subgraph can be labeled ``lucky'' and ``unlucky'' 
simultaneously. 

The numbers of lucky and unlucky triangles ($2$-paths) are denoted by $\Delta_L$ 
and $\Delta_U$ ($\Lambda_L$ and $\Lambda_U$). The number of triangles ($2$-paths)
receiving both lucky and unlucky labels is denoted $\Delta_{LU}$ ($\Lambda_{LU}$).
Clearly, we have
\begin{equation}\label{2016-04-01+}
 \Delta=\Delta_L+\Delta_U-\Delta_{LU},
\qquad
\Lambda=\Lambda_L+\Lambda_U-\Lambda_{LU}.
\end{equation}

\begin{proof}[Proof of Theorem \ref{T1-2}] 
In the proof we use some ideas of  \cite{OstroumovaSamosvat2014}.
Before the proof we collect notation and auxiliary facts.
Let $Z_1,Z_2,\dots$ 
be iid copies of $Z$. We denote by  $\E_Z$ ($\Var_Z$) the conditional 
expectation (variance) given the sequence $\{Z_i, i\ge 1\}$. Furthermore, we 
denote $z_1=\E Z$ and 
$S_Z=\sum_{i\in[n]}Z_i^2$. Given $A\subset [n]$ we denote
$S_{Z,A}=\sum_{i\in A}Z_i^2$. By 
$d_{i,A}=\sum_{j\in A\setminus\{i\}}{\mathbb I}_{v_i\sim v_j}$ we denote
the number of neighbours from the set $\{v_j, \, j\in A\}\subset V$ of a vertex
$v_i$ in the intersection graph $G$.
In the proof we use the following  inequalities 
for the intersection probability of two independent uniformly distributed random 
subsets 
${\cal S},{\cal T}\subset W$
(see, e.g., Lemma 6 of \cite{Bloznelis2013})
 \begin{equation}\label{st19}
stm^{-1}(1-st/(m-s))
\le
\PP
\Bigl(
{\cal S}\cap {\cal T}\not=\emptyset\Bigr|\, |{\cal S}|=s, |{\cal T}|=t
\Bigr)
\le
stm^{-1}.
\end{equation}
We recall that every vertex $v_i\in V=\{v_1,\dots, v_n\}$ 
is prescribed a subset
$S_i\subset W=\{w_1,\dots, w_m\}$ of size $|S_i|=\min\{m, Z_i\}$. 
Furthermore, the 
condition $\E Z<\infty$ ensures the existence of a positive sequence 
$\varepsilon_n\downarrow 0$ such that 
\begin{equation}\label{ZZ20}
\PP(\max_{i\in [n]}Z_i<n\varepsilon_n)=1-o(1), 
\end{equation}
see Lemma \ref{L8}. Note that (\ref{ZZ20}) implies 
$\PP(\max_{i\in [n]}Z_i<m)=1-o(1)$.

Now we prove the theorem.
 For this purpose we show that 
there is a constant $c^*>0$ and a sequence $\varkappa_n\downarrow 0$ 
both depending on the distribution of $Z$ and on $\beta$  such that
\begin{eqnarray}\label{LY1-20}
&&
\PP(\Lambda>c^*S_Z)
=
1-o(1),
\\
\label{LY1-20a+}
&&
\PP(\Delta\le n^{3/2}\varkappa_n)=1-o(1),
\\
\label{LY1-20a}
&&
\Delta
=
O_P\bigl(n+n^{-3}S_Z^3\bigr).
\end{eqnarray}
Let us show that (\ref{LY1-20}), (\ref{LY1-20a+}), (\ref{LY1-20a}) 
imply $C_G=o_P(1)$.
Introduce the event 
$B=\{S_Z\le n^{3/2}\sqrt{\varkappa_n}\}$ and let ${\bar B}$ denote the complement 
event.  We have  
\begin{eqnarray}
C_G=\frac{3\Delta}{\Lambda}=
\frac{3\Delta}{\Lambda}{\mathbb I}_{B}
+
\frac{3\Delta}{\Lambda}{\mathbb I}_{{\bar B}}
=
O_P\Bigl(\frac{n}{S_Z}\Bigr)+O_P(\varkappa_n)+O_P(\sqrt{\varkappa_n})
=o_P(1).
\end{eqnarray}
Here on the event $B$ we have bounded $\Delta$ using (\ref{LY1-20a}) and
 on the event ${\bar B}$ we have applied
(\ref{LY1-20a+}).
 In the final step we invoked the  bound 
$n/S_Z=o_P(1)$, which follows by  Lemma \ref{L1}.
It remains to prove (\ref{LY1-20}), (\ref{LY1-20a+}) and (\ref{LY1-20a}).

Proof of (\ref{LY1-20}). Fix $0<a<b$ such that $p:=\PP(a<Z<b)>0$. 
Define random subsets of $[n]$
\begin{displaymath}
R=\{i:\, a<Z_i<b\},
\qquad
T=\{i:\, Z_i\le \ln^2n\},
\qquad
\Theta=\{i:\, \ln^2n<Z_i\le n\varepsilon_n\}. 
\end{displaymath}
Note that for  any $i\in [n]$ and $A\subset [n]$ 
the degree $d_i$ of a vertex $v_i$ 
is larger or equal to 
$d_{i,A}$. 
Therefore, we have
\begin{equation}\label{TU19}
 \Lambda
=
\sum_{i\in [n]}\binom{d_i}{2}
\ge 
\Lambda_T+\Lambda_\Theta,
\qquad
\Lambda_T=\sum_{i\in T}\binom{d_{i,T}}{2},
\qquad
\Lambda_\Theta=\sum_{i\in \Theta}\binom{d_{i,R}}{2}.
\end{equation}
In order to prove (\ref{LY1-20}) we show below that
\begin{equation}\label{TU20}
 \Lambda_T= (1+o_P(1))2^{-1}\beta^{-2}z_1^2S_{Z,T},
\qquad
\PP
\Bigl(
\Lambda_\Theta\ge \Bigl(\frac{ap}{4\beta}\Bigr)^2S_{Z,\Theta}
\Bigr)
=1-o(1).
\end{equation}
Indeed, (\ref{TU19}), (\ref{TU20}) combined with the identity
$S_{Z,T}+S_{Z,\Theta}=S_Z$, which holds with probability $1-o(1)$
(see  (\ref{ZZ20})), imply (\ref{LY1-20}).

\smallskip

Proof of the first relation of (\ref{TU20}). In view of Lemma  \ref{L7}
it suffices to show that 
\begin{equation}\label{LY1T}
\E_{Z}\Lambda_T= (1+o_P(1))2^{-1}\beta^{-2}z_1^2S_{Z,T},
\qquad
\Var_{Z}\Lambda_T=o_P(S_{Z,T}^2).
\end{equation}
 We note that the sum 
$S_{Z,T}=\sum_{i\in[n]}Z_i^2{\mathbb I}_{Z_i<\ln^2 n}$ is superlinear in $n$ 
as $n\to+\infty$, see Lemma
\ref{L1}.
 
To prove the first relation of (\ref{LY1T}) we 
write
\begin{equation}\nonumber
\Lambda_T=
\sum_{i\in T}
\sum_{\{j,k\}\subset T\setminus\{i\}}
{\mathbb I}_{v_i\sim v_j}
{\mathbb I}_{v_i\sim v_k}
\end{equation}
and evaluate the expectation 
\begin{displaymath}
\E_{Z}\Lambda_T
=
\sum_{i\in T}
\sum_{\{j,k\}\subset T\setminus\{i\}}
{\bar p}_{ij}
{\bar p}_{ik},
\qquad
{\bar p}_{ij}:=\PP_{Z}(v_i\sim v_j)=\PP_Z(S_i\cap S_j\not=\emptyset).
\end{displaymath}
Invoking the inequalities that follow from (\ref{st19}) 
 \begin{equation}\label{st18}
Z_iZ_jm^{-1}(1-2m^{-1}\ln^4n)
\le
{\bar p}_{ij}\le
Z_iZ_jm^{-1}
\end{equation}
we obtain 
\begin{displaymath}
\E_{Z}\Lambda_T
=
\Bigl(1+O\Bigl(\frac{\ln^4n}{m}\Bigr)\Bigr)
\sum_{i\in T}
\sum_{\{j,k\}\subset T\setminus\{i\}}
\frac{Z_i^2Z_jZ_k}{m^2}
=
(1+o_P(1))
S_{Z,T}
\frac{1}{2}\frac{{\hat z}_{1,T}^2}{\beta^2}.
\end{displaymath}
Here we denote
${\hat z}_{1,T}:=n^{-1}\sum_{i\in T}Z_i$. Finally, 
the law of large numbers implies 
${\hat z}_{1,T}=z_1+o_P(1)$. 

To prove the second relation of (\ref{LY1T}) 
we write $\Lambda_T$ in the form 
$\Lambda_T=\E_{Z}\Lambda_T+L_T+Q_T$, where
\begin{eqnarray}\nonumber
L_T
&=&
\sum_{\{i,j\}\subset T}
\bigl({\mathbb I}_{v_i\sim v_j}-{\bar p}_{ij}\bigr)
\sum_{k\in T\setminus\{i,j\}}
\bigl({\bar p}_{ik}+{\bar p}_{jk}\bigr),
\\
\nonumber
Q_T
&=&
\sum_{i\in T}\sum_{\{j,k\}\subset T\setminus\{i\}}
\bigl({\mathbb I}_{v_i\sim v_j}-{\bar p}_{ij}\bigr)
\bigl({\mathbb I}_{v_i\sim v_k}-{\bar p}_{ik}\bigr).
\end{eqnarray}
We observe that $L_T$ and $Q_T$ are conditionally 
uncorrelated (given $\{Z_n\}$). Therefore
\begin{equation}\label{VARQT}
\Var_{Z}\Lambda_T
=
\Var_{Z}L_T+\Var_{Z}Q_T.
\end{equation}
We bound the summands on the right using (\ref{st18}). 
A simple  calculation shows that
\begin{eqnarray}
\nonumber
\Var_{Z}L_T
&=&
\sum_{\{i,j\}\subset T}{\bar p}_{ij}(1-{\bar p}_{ij})
\Bigl(\sum_{k\in T\setminus\{i,j\}}
\bigl({\bar p}_{ik}+{\bar p}_{jk}\bigr)\Bigr)^2
\\
\nonumber
&
\le
&
\sum_{\{i,j\}\subset T}
\frac{Z_iZ_j}{m}
\Bigl(Z_i\beta^{-1}{\hat z}_{1,T}+Z_j\beta^{-1}{\hat z}_{1,T}\Bigr)^2
\\
\nonumber
&
\le
&
2\beta^{-3}{\hat z}_{1,T}^3\sum_{i\in T}Z_i^3.
\end{eqnarray}
Now, invoking the inequality 
$\sum_{i\in T}Z_i^3\le S_{Z,T}\max_{i\in T}Z_i\le S_{Z,T}^{3/2}$
and the bound ${\hat z}_{1,T}=O_P(1)$ we obtain 
$\Var_{Z}L_T=O_P(S_{Z,T}^{3/2})=o_P(S_{Z,T}^2)$. Furthermore, we have
\begin{eqnarray}
\nonumber
\Var_{Z}Q_T
&=&
\sum_{i\in T}\sum_{\{j,k\}\subset T\setminus\{i\}}
{\bar p}_{ij}(1-{\bar p}_{ij})
{\bar p}_{ik}(1-{\bar p}_{ik})
\le
\sum_{i\in T}\sum_{\{j,k\}\subset T\setminus\{i\}}
{\bar p}_{ij}
{\bar p}_{ik}.
\end{eqnarray}
Invoking the inequality ${\bar p}_{ij}
{\bar p}_{ik}\le Z_i^2Z_jZ_km^{-2}$ (which follows from (\ref{st18})) we obtain
\begin{displaymath}
 \Var_{Z}Q_T
\le
S_{Z,T}{\hat z}_{1,T}^2\beta^{-2}=O_P(S_{Z,T})=o_P(S^2_{Z,T}).
\end{displaymath}
Finally, (\ref{VARQT}) implies
$\Var_{Z}\Lambda_T=o_P(S_{Z,T}^2)$.

\smallskip

Proof of the second relation of (\ref{TU20}).
For every $i\in \Theta$ and $j\in R$ we have, by (\ref{st19}),
\begin{displaymath}
 \PP_Z(v_i\sim v_j)\ge 0.9 am^{-1}Z_i=:q_i.
\end{displaymath}
Here $0.9$ is a lower bound for the number $1-Z_iZ_j/(m-Z_j)$ 
valid for sufficiently large $m,n$.
We note that conditionally, given $\{Z_i,\, i\ge 1\}$ and $|R|$,
the random variable $d_{i,R}$ is a sum of independent
indicators (their number is $|R|$) each having success probability at least $q_i$.
Furthermore, $|R|$ has binomial distribution with 
 mean $np$. Given $t\ge 0$ we have
\begin{equation}\label{20+tt}
 \PP(d_{i,R}\ge t)\ge \PP\bigl(d_{i,R}\ge t\bigl|\,|R|\ge np/2\bigr)-r_1
\ge 
\PP(L_i\ge t)-r_1.
\end{equation}
Here $r_1 =\PP(|R|<np/2)$ and $L$ is the sum of $n':=\lfloor np/2\rfloor$ 
independent 
indicators with the same success probability $q_i$. Chernoff's inequality implies
\begin{equation}\label{20+ttt}
 \PP(|R|<np/2)\le e^{-np/4}=O(n^{-9}),
\qquad
\PP(L< n'q_i/2)\ge e^{-n'q_i/4}=O(n^{-9}).
\end{equation}
Note that the second bound holds uniformly in $i\in \Theta$, since $Z_i\ge \ln^2n$ 
for $i\in \Theta$. Choosing $t_i=n'q_i/2$ in (\ref{20+tt}) we obtain
\begin{displaymath}
 \PP(d_{i,R}\ge t_i, \, i\in \Theta)\ge 1-O(n^{-8}).
\end{displaymath}
This bound implies the second relation of (\ref{TU20}).

\medskip

Proof of (\ref{LY1-20a}). We recall that $\max_{i\in[n]}Z_i\le m$ with probability
$1-o(1)$. Assuming that this inequality holds we prove  below that
$\E_Z\Delta\le O_P(n+n^{-3}S_Z^3)$. Now (\ref{LY1-20a}) follows by Lemma \ref{L7} (ii).

We have $\Delta\le\Delta_L+\Delta_U$, where 
the numbers $\Delta_L$ and $\Delta_U$ of lucky and unlucky triangles satisfy
\begin{eqnarray}
 \nonumber
\Delta_L
&
\le
&
\sum_{w\in W}
\sum_{\{i,j,k\}\subset [n]}
{\mathbb I}_{w\in S_i}
{\mathbb I}_{w\in S_j}
{\mathbb I}_{w\in S_k},
\\
\nonumber
\Delta_U
&
\le
&
\sum_{
\begin{subarray}{c}
w,\tau,\varkappa\in W\\
w\not=\tau\not=\varkappa
\end{subarray}
} 
\
\sum_{\{i,j,k\}\subset [n]}
{\mathbb I}_{w\in S_i}
{\mathbb I}_{w\in S_j}
{\mathbb I}_{\tau\in S_i}
{\mathbb I}_{\tau\in S_k}
{\mathbb I}_{\varkappa\in S_j}
{\mathbb I}_{\varkappa\in S_k}.
\end{eqnarray}
Invoking the identity $\PP_Z(w\in S_i)=m^{-1}Z_i$ and inequality 
$\PP_Z(w,\tau\in S_i)\le m^{-2}Z_i^2$ we obtain
\begin{eqnarray}\nonumber
 \E_Z \Delta_L
&
\le
& m^{-2}\sum_{\{i,j,k\}\subset [n]}Z_iZ_jZ_k
\le
\beta^{-2}{\hat z}_1^3 n=O_P(n),
\\
\nonumber
 \E_Z \Delta_U
&
\le
& m^{-3}\sum_{\{i,j,k\}\subset [n]}Z_i^2Z_j^2Z_k^2
\le \beta^{-3}n^{-3}S_Z^3.
\end{eqnarray}

\medskip

Proof of (\ref{LY1-20a+}). By Lemma \ref{L8}, we can find an increasing positive
function $\psi(t)\uparrow +\infty$ as $t\to+\infty$
such that $\E Z\psi(Z)<\infty$. We can assume that $\psi(t)<t^{1/4}$, for $t\ge 1$.
Denote $\delta_n=1/\psi(n^{1/4})$ and 
$\tau_n=\E Z\psi(Z){\mathbb I}_{\{Z\ge n^{1/4}\}}$.
Put $\varkappa_n=\min\{\delta_n^{1/4},\tau_n^{1/4}\}$.
Clearly, $\delta_n\downarrow 0$, $\tau_n\downarrow 0$ and $\varkappa_n\downarrow 0$.
We observe that
\begin{equation}\label{deltau}
 \E Z^2{\mathbb I}_{\{Z<\sqrt{n}\delta_n\}}\le \sqrt{n}\delta_n z_1,
\qquad
\PP(Z\ge \sqrt{n}\delta_n)
\le 
\frac{\E Z\psi(Z){\mathbb I}_{\{Z\ge \sqrt{n}\delta_n\}}}
{\sqrt{n}\delta_n\psi(\sqrt{n}\delta_n)}
\le \frac{\tau_n}{\sqrt{n}}.
\end{equation}
Now we estimate $\Delta$. 
 We observe that the number $\Delta_i$ of triangles incident 
to a
given vertex $v_i\in V$ is at most $\binom{d_i}{2}$. Furthermore, $\Delta_i$
is always less than the total number of edges in the graph, denoted by ${\cal E}$. 
Therefore, we have
\begin{equation}\label{laikas}
 3\Delta
=
\sum_{i\in [n]}\Delta_i
\le 
\sum_{i\in [n]:\, Z_i<\sqrt{n}\delta_n}\binom{d_i}{2}
+
{\cal E}\sum_{i\in [n]:\, Z_i\ge \sqrt{n}\delta_n}1=:U_1+{\cal E}U_2.
\end{equation}
We  show below that  $\E U_1=O(n^{3/2}\delta_n)$,
$\E U_2=O(\sqrt{n}\tau_n)$
 and
$\E {\cal E}=O(n)$. These bounds together with (\ref{laikas}) 
imply  (\ref{LY1-20a+}).

For 
${\cal E}=\sum_{\{u,v\}\subset V}{\mathbb I}_{u\sim v}$  we have, by (\ref{st19}),
\begin{displaymath}
 \E {\cal E}={\binom{n}{2}}\PP(v_1\sim v_2)\le {\binom{n}{2}}\E (Z_1Z_2/m)
\le
\frac{n^2}{2m}(\E Z_1)^2=O(n).
\end{displaymath}
For $U_2=\sum_{i\in [n]}{\mathbb I}_{Z_i>\sqrt{n}\delta_n}$ we have, see 
(\ref{deltau}),
\begin{equation}\nonumber
 \E U_2\le n\PP(Z_1>\sqrt{n}\delta_n)\le \sqrt{n}\tau_n.
\end{equation}
It remains to bound $\E U_1$. 
For every $i$ we have, by (\ref{st19}), 
\begin{equation}\label{d2-11+}
\E\Bigl(\binom{d_i}{2}\Bigr|Z_i\Bigr)
=
\sum_{\{k,r\}\subset [n]\setminus\{i\}}
\PP(v_k\sim v_i |Z_i)\PP(v_r\sim v_i|Z_i)
\le
{\binom{n-1}{2}}
Z_i^2(\E Z_1)^2m^{-2}.
\end{equation}
Invoking the first inequality of (\ref{deltau}) we obtain
\begin{displaymath}
\E\binom{d_i}{2}{\mathbb I}_{\{Z_i\le\sqrt{n}\delta_n\}}
\le
{\binom{n-1}{2}}\frac{z_1^2}{m^2}\E Z_i^2
{\mathbb I}_{\{Z_i\le\sqrt{n}\delta_n\}}
\le \frac{z_1^3}{2\beta^2}\sqrt{n}\delta_n.
\end{displaymath}
Finally, we have
\begin{displaymath}
 \E U_1
=\sum_{i\in [n]}\binom{d_i}{2}{\mathbb I}_{Z_i\le \sqrt{n}\delta_n}
\le 
\frac{z_1^3}{2\beta^2}n^{3/2}\delta_n.
\end{displaymath}

\end{proof} 

\medskip

\begin{proof}[Proof of Theorem \ref{T1}] In the proof we use the notation $X_i=|D_{w_i}|$, $w_i\in W$, and $S_X=\sum_{i\in [n]}X_i^3$.
 
  We firstly 
count triangles. 
For every $w\in W$ there are $N_w:=\binom{|D_w|}{3}$ lucky
triangles covered by $D_w$. 
We have, by inclusion-exclusion, that
\begin{equation}\label{DELTAN}
 N-N^*\le \Delta_L\le N,
\qquad
{\text{where}}
\qquad
N=\sum_{w\in W}\binom{|D_w|}{3},
\qquad
N^*=\sum_{\{w,\tau\}\subset W}\binom{|D_w\cap D_\tau|}{3}.
\end{equation}
Here $\binom{|D_w\cap D_\tau|}{3}$ counts triangles
covered by $D_w$ and $D_\tau$ simultaneously.
Every unlucky triangle has its edges covered by distinct sets. 
Therefore, 
$\Delta_U$ is at most the sum
\begin{displaymath}
 N^{**}
:=
\sum_{\{x,y,z\}\subset V}\sum_{1\le i\not= j\not= k\le m}
{\mathbb I}_{\{x,y\}\subset D_i}
{\mathbb I}_{\{x,z\}\subset D_j}
{\mathbb I}_{\{y,z\}\subset D_k}.
\end{displaymath}
We estimate the total number of triangles $\Delta$ from the inequalities
 $\Delta_L\le \Delta\le \Delta_L+\Delta_U$. Hence
\begin{equation}\label{Delta-1}
|\Delta-N|\le N^*+N^{**}. 
\end{equation}

We secondly count $2$-paths. We have $\Lambda=\Lambda_L+\Lambda_U-\Lambda_{LU}$,
where $\Lambda_{LU}$ is the number of paths labeled both lucky and unlucky.
For the number of lucky paths   
$\Lambda_L=3\Delta_L$, we can evaluate 
$\Lambda_L$ using (\ref{DELTAN}).
Furthermore, the number $\Lambda_U$ of unlucky paths is at most the sum
\begin{displaymath}
M^*:=\sum_{\{w,\tau\}\subset W}|D_w\cap D_\tau|\times |D_w|\times |D_\tau|.
\end{displaymath}
Here $|D_w\cap D_\tau|\times |D_w|\times |D_\tau|$ is an upper 
bound for the number of $2$-paths with the central vertex belonging to
$D_w\cap D_\tau$ and with the endpoints belonging to $D_w\setminus D_\tau$ and
$D_\tau\setminus D_w$ respectively. 
From the inequalities $\Lambda_L\le \Lambda\le \Lambda_L+\Lambda_U$ 
we obtain
\begin{equation}\label{Lambda-1}
 |\Lambda-3N|\le 3N^*+M^*.
\end{equation}

Finally, we derive the relation 
$C_{G^*}=3\Delta/\Lambda
= 1+o_P(1)$  from (\ref{Delta-1}), (\ref{Lambda-1})
and the bounds $N^*, N^{**}, M^*=o_P(N)$ shown below.

Let us bound $N^*, N^{**}, M^*$. 
We note that the sum $S_X$ is superlinear in $m$. Indeed, 
 Lemma \ref{L1} implies that
$\PP(S_X>m\phi_m)=1+o(1)$ for some $\phi_m\uparrow +\infty$.
A simple consequence of this fact is that 
$6N=(1+o_P(1))S_X$ is superlinear in $m$ as well. Furthermore, the bounds
$N^*, N^{**}, M^*=o_P(N)$
are equivelaent to the bounds
$N^*, N^{**}, M^*=o_P(S_X)$. 
In order to show these we prove that  
\begin{equation}\label{NNMM}
\E_{\mathbb X} N^*=o_P(S_X),
\qquad 
\E_{\mathbb X} N^{**}=o_P(S_X),
\qquad 
\E_{\mathbb X} M^*=o_P(S_X),
\end{equation}
and apply  Lemma \ref{L7}.
To prove the first bound of (\ref{NNMM}) we
write $\binom{|D_w\cap D_\tau|}{3}$ in the form 
\begin{displaymath}
 \binom{|D_w\cap D_\tau|}{3}
=
\sum_{\{x,y,z\}\subset V}
{\mathbb I}_{\{x,y,z\}\subset D_w}
{\mathbb I}_{\{x,y,z\}\subset D_\tau},
\end{displaymath}
evaluate the conditional expectation  
\begin{displaymath}
 \E_{\mathbb X} N^*
=
{\binom{n}{3}}
\sum_{\{i,j\}\subset [m]}{\binom{X_i}{3}}{\binom{X_j}{3}}{\binom{n}{3}}^{-2},
\end{displaymath}
and invoke (\ref{L2-2}) of Lemma \ref{L2}.
To prove the second bound of (\ref{NNMM}) we
evaluate 
\begin{displaymath}
 \E_{\mathbb X} N^{**}
=
{\binom{n}{3}}
\sum_{1\le i \not= j\not=k\le m}
{\binom{X_i}{2}}{\binom{X_j}{2}}{\binom{X_k}{2}}{\binom{n}{2}}^{-3}
\end{displaymath}
and invoke (\ref{L3-1}) Lemma \ref{L2}.
To prove the third bound of (\ref{NNMM}) 
we evaluate 
\begin{displaymath}
 \E_{\mathbb X} M^{*}
=
\E_{\mathbb X}
\sum_{\{i,j\}\subset[m]}
X_iX_j\sum_{x\in V}{\mathbb I}_{x\in D_i}{\mathbb I}_{x\in D_j}
=n^{-1}\sum_{\{i,j\}\subset[m]}
X_i^2X_j^2
\end{displaymath}
and invoke (\ref{L4-1}) of Lemma \ref{L2}.
\end{proof}

\medskip

\begin{proof}[Proof of Theorem \ref{T2}] Before the proof we introduce 
some notation.
We fix positive sequences $\varepsilon\downarrow 0$ and $t_n\uparrow +\infty$ such 
that $\PP\bigl(\max_{i\in[n]}Y_i<\varepsilon_nt_n^{-1}n\bigr)=1-o(1)$, 
see Lemma \ref{L8}. Note that $\E Y_1{\mathbb I}_{Y_1\ge t_n}=o(1)$ implies
$n^{-1}\sum_{i\in [n]}Y_i{\mathbb I}_{Y_i\ge t_n}=o_P(1)$.
We recall that the inhomogeneous  graph $G$ is defined by
a  bipartite graph $H$ with the bipartition $V\cup W$. 
We color vertices in $V$
white and those in $W$ black. Given a bipartite graph $H'=(V',W';E')$ 
with the bipartition 
$V'\cup W'$ and the edge set $E'$, we color vertices in $V'$ white and those in 
$W'$ black.
Define the bipartite graphs
\begin{eqnarray}
\nonumber
&&
 H_1
=
\Bigl(
\{1,2,3\},
\{a\};
\bigl\{
\{1,a\}, \{2,a\},  \{3,a\}
\bigr\}
\Bigr),
\\
\nonumber
&&
 H_2
=
\Bigl(
\{1,2,3\},
\{a,b\};
\bigl\{
\{1,a\}, \{2,a\}, \{2,b\}, \{3,b\}
\bigr\}
\Bigr),
\\
\nonumber
&&
 H_3
=
\Bigl(
\{1,2,3\},
\{a,b,c\};
\bigl\{
\{1,a\}, \{2,a\}, \{2,b\}, \{3,b\}, \{1,c\}, \{3,c\}
\bigr\}
\Bigr),
\\
\nonumber
&&
 H_4
=
\Bigl(
\{1,2,3\},
\{a,b,c\};
\bigl\{
\{1,a\}, \{2,a\}, \{2,b\}, \{3,b\}, \{1,c\}, \{2,c\}
\bigr\}
\Bigr),
\\
\nonumber
&&
 H_5
=
\Bigl(
\{1,2,3\},
\{a,b\};
\bigl\{
\{1,a\}, \{2,a\}, \{3,a\}, \{1,b\}, \{2,b\}
\bigr\}
\Bigr).
\end{eqnarray}
For $1\le i\le 5$ we denote by ${\cal H}_i$ the set of 
copies of $H_i$ in $H$. The number of copies is denoted 
$N_i=|{\cal H}_i|$. 
We note that every $H'\in {\cal H}_1$ defines a lucky 
triangle in $G$, 
$H''\in{\cal H}_2$ defines an unlucky path in $G$, and
$H'''\in {\cal H}_3$ defines  an unlucky triangle in 
$G$.  In particular, we have $\Delta_L\le N_1$, $\Lambda_L\le 3N_1$,
$\Delta_U\le N_3$, and $\Lambda_U\le N_2$.
We call an edge $v_i\sim v_j$ of $G$ heavy 
if $Y_iY_j>\varepsilon_n n$. 
A subgraph of $G$ is called heavy if it contains
a heavy edge. Otherwise it is called light.
The number of heavy (light) copies of $H_i$ is denoted $N_i^+$ ($N_i^-$).
\medskip

 The theorem follows from (\ref{2016-04-01+}) and the relations 
\begin{eqnarray}
\label{N-N}
 N_1
&=&
(1+o_P(1)){\tilde \E}N_1=6^{-1}\beta^{-3/2}b_1^3S_X+o_P(S_X),
 \\
\label{LU-L}
N_2
&=&
(1+o_P(1)){\tilde \E}N_2=2^{-1}a_2^2b_1^2S_Y+o_P(S_Y),
\\
\label{2016-04-01-1}
&&
 \Delta_L=N_1+o_P(S_X),
\\
\label{2016-04-01-2}
&&
\Delta_U=o_P(S_Y),
\\
\label{2016-04-01-3}
&&
\Lambda_U=N_2+o_P(S_Y)+o_P(S_X),
\\
\label{2016-04-01-4}
&&
\Lambda_{LU}=o_P(S_X)+o_P(S_Y).
\end{eqnarray}
Relations (\ref{N-N}), (\ref{LU-L}) follow from 
Lemmas \ref{L7}, \ref{L5}, \ref{unlucky+}. 
It remains to 
prove 
(\ref{2016-04-01-1}-\ref{2016-04-01-4}).

We begin with establishing auxiliary facts. Denote
\begin{displaymath}
L_n
:=
n^{-1}\sum_{\{i,j\}\subset [n]}
Y_iY_j(Y_i+Y_j){\mathbb I}_{Y_iY_j>\varepsilon_n n},
\qquad
L'_n
:=
n^{-1}\sum_{\{i,j\}\subset [n]}
Y_iY_j{\mathbb I}_{Y_iY_j>\varepsilon_n n}.
\end{displaymath}
We have
\begin{eqnarray}\label{2016-04-01+21}
&&
L_n=o_P(S_Y),
\qquad
L'_n=o_P(n)
\\
\label{2016-04-05+1}
&&
N_1^+=o_P(S_X),
\qquad
N_2^+=o_P(S_Y),
\\
\label{2016-04-05+2}
&&
N_3^-=o_P(S_Y),
\qquad
N_4^-=o_P(S_Y),
\qquad
N_5^-=o_P(S_X).
\end{eqnarray}
Proof of (\ref{2016-04-01+21}). 
On the event
$\{\max_{i\in[n]}Y_i\le \varepsilon_nt_n^{-1}n\}$ which 
has probability $1-o(1)$ we have
\begin{displaymath}
 Y_iY_j(Y_i+Y_j){\mathbb I}_{Y_iY_j>\varepsilon_n n}
\le
Y_i^2Y_j
{\mathbb I}_{Y_j>t_n}+Y_j^2Y_i{\mathbb I}_{Y_i>t_n}.
\end{displaymath}
Hence $L_n
 \le 
 S_Yn^{-1}\sum_{i\in [n]}Y_i{\mathbb I}_{Y_i>t_n}$. 
 The 
 bound 
 $n^{-1}\sum_{i\in [n]}Y_i{\mathbb I}_{Y_i>t_n}=o_P(1)$
 implies the first bound of (\ref{2016-04-01+21}). The 
 second bound is obtained in a similar way.

Proof of (\ref{2016-04-05+1}).  We combine 
 Lemma \ref{L8} with 
 the
inequalities
\begin{eqnarray}\nonumber
 {\tilde\E}N_1^+
&
\le
&
 \sum_{\{x,y,z\}\subset V}\sum_{w\in W}
 \frac{Y_xY_yY_zX_w^3}{(nm)^{3/2}}
\bigl(
{\mathbb I}_{\{Y_xY_y>\varepsilon_n n}
+
{\mathbb I}_{\{Y_xY_z>\varepsilon_n n}
+
{\mathbb I}_{\{Y_yY_z>\varepsilon_n n}
\bigr)
\\
\label{2014-04-01+31}
&
\le
&
3\beta^{-3/2}{\hat b}_1n^{-1}L'_nS_X
=
o_P(S_X),
\end{eqnarray}
\begin{eqnarray}\nonumber
 {\tilde \E}N_2^+
&\le& 
\sum_{\Lambda}
\
\
\sum_{w,\tau\in W:\, w\not=\tau}
\frac{Y_yY_x^2Y_zX_w^2X_\tau^2}{n^2m^2}
\bigl({\mathbb I}_{Y_xY_y>\varepsilon_n n}+{\mathbb I}_{Y_xY_z>\varepsilon_n n}\bigr)
\\
\nonumber
&=&
\sum_{x\in V}
\ 
\
\sum_{y,z\in V\setminus\{x\}: \, y\not= z}
\
\
\sum_{w,\tau\in W:\, w\not=\tau}
\frac{Y_yY_x^2Y_zX_w^2X_\tau^2}{n^2m^2}
{\mathbb I}_{Y_xY_y>\varepsilon_n n}
\\
\label{2014-04-01+32}
&\le &
{\hat a}_2^2
{\hat b}_1L_n
=o_P(S_Y).
\end{eqnarray}
In the last steps of (\ref{2014-04-01+31}) and (\ref{2014-04-01+32}) we have used
(\ref{2016-04-01+21}).

Proof of (\ref{2016-04-05+2}).
We combine 
 Lemma \ref{L8} with 
 the
inequalities
\begin{eqnarray}\nonumber
 {\tilde \E}N_3^-
&
\le
& 
3!\sum_{\{x,y,z\}\subset V}
\
\
\sum_{\{w,\tau,\eta\}\subset W}
\frac{Y_x^2Y_y^2Y_z^2X_w^2X_\tau^2X_\eta^2}{(nm)^{3}}
{\mathbb I}_{Y_xY_y\le \varepsilon_n n}
\\
\nonumber
&
\le
&
 \varepsilon_n
 {\hat a}_2^3
 \sum_{x,y,z\in V:\, x\not=y\not=z}
\frac{Y_xY_yY_z^2}
{n^2}
\\
\nonumber
&
\le
&
 \varepsilon_n
 {\hat a}_2^3{\hat b}^2 S_Y=o_P(S_Y),
\end{eqnarray}
\begin{eqnarray}\nonumber
 {\tilde \E}N_4^-
&
\le
& 
\sum_{x,y,z\in V:\, x\not=y\not=z}
\
\
\sum_{w,\tau,\eta\in W:\, w\not=\tau\not=\eta}
\frac{Y_x^2Y_y^3Y_zX_w^2X_\tau^2X_\eta^2}{(nm)^3}
{\mathbb I}_{Y_xY_y\le \varepsilon_n n}
\\
\nonumber
&
\le
&
\varepsilon_n{\hat a}_2^3 
\sum_{x,y,z\in V:\, x\not=y\not=z}
\frac{Y_xY_y^2Y_z}{n^2}
\\
\nonumber
&
\le
&
\varepsilon_n
{\hat a}_2^3{\hat b}_1^2S_Y
=o_P(S_Y),
\end{eqnarray}
\begin{eqnarray}\nonumber
 {\tilde \E}N_5^-
&
\le
& 
\sum_{x,y,z\in V:\, x\not=y\not=z}
\
\
\sum_{w,\tau\in W:\, w\not=\tau}
\frac{Y_x^2Y_y^2Y_zX_w^3X_\tau^2}{(nm)^{5/2}}
{\mathbb I}_{Y_xY_y\le \varepsilon_n n}
\\
\nonumber
&
\le
&
\varepsilon_n{\hat a}_2S_X 
\sum_{x,y,z\in V:\, x\not=y\not=z}
\frac{Y_xY_yY_z}{\beta^{3/2}n^3}
\\
\nonumber
&
\le
&
\varepsilon_n{\hat a}_2{\hat b}_1^3\beta^{-3/2}S_X
=o_P(S_X).
\end{eqnarray}

Now we are ready to prove (\ref{2016-04-01-1}-\ref{2016-04-01-4}).

Proof of (\ref{2016-04-01-3}).
Given a light unlucky path $x\sim y\sim z$ of $G$, let
${\cal H}^{x,y,z}_{2}\subset {\cal H}_2$ denote the
set of copies of $H_2$ defining this path. Fix an
element  $H_2^*\in {\cal H}^{x,y,z}_{2}$.
All the other elements of 
${\cal H}^{x,y,z}_2$ are called duplicates. We do this for each light unlucky path.
We claim that the total number of duplicates is at 
most $N_3^{-}+N_5^{-}$. Indeed, given 
$H_2^*\in {\cal H}^{x,y,z}_{2}$ with bipartition 
denoted by 
$V'=\{x,y,z\}\subset V$ and $W'=\{w,\tau\}\subset W$,
one potential duplicate is the distinct element of ${\cal H}^{x,y,z}_2$ with the same attribute set 
$W'$. 
The union of both copies of $H_2$ defines  the complete 
bipartite graph on $V'\cup W'$ and hence a copy of 
$H_5$
on $V'\cup W'$. The duplicates of this kind are counted by $N_5^{-}$.
Remaining possible duplicates of $H_2^*$ have attribute sets different from $W'$.
We note that a duplicate $H''_2$ whose attribute set $W''\not= W'$ defines a copy of $H_3$. 
Indeed, for $W''\cap W'=\{w\}$ the
union $H_2^*\cup H''_2$ is a copy of $H_3$. 
Furthermore, for 
$W''\cap W'=\emptyset$, the
union $H_2^*\cup H''_2$ with deleted vertex $w$ is a 
copy of $H_3$. Note that distinct duplicates $H''_2$ 
define distinct copies of $H_3$. Hence, their total 
number is 
at most $N_3^{-}$. Our claim is established. It implies that 
 the number of light unlucky paths is at least 
$N_2^{-}-N_3^{-}-N_5^{-}$.  Hence the total number of unlucky paths 
\begin{displaymath}
\Lambda_U
\ge 
N_2^{-}-N_3^{-}-N_5^{-}
=
N_2-N_2^{+}-N_3^{-}-N_5^{-}.
\end{displaymath}
These inequalities in combination with (\ref{2016-04-05+1}), (\ref{2016-04-05+2})
and the simple inequality $\Lambda_U\le N_2$ imply 
(\ref{2016-04-01-3}).

Proof of (\ref{2016-04-01-4}).
A light path $x\sim y\sim z$ receives both labels lucky 
and unlucky whenever $H$ has a light copy of $H_1$ with 
the vertex set $\{x,y,z\}\cup \{w\}$ and it has a light
copy of $H_2$ with the vertex 
set $\{x,y,z\}\cup \{w',\tau\}$. Here $w$ and $w'\not=\tau$ are
arbitrary
 elements of $W$ not necessarily all distinct. The union of these two copies contains a light copy of $H_5$. Hence the number of light paths which are both lucky and unlucky is at most $N_5^{-}$.
The number of heavy unlucky paths is at most $N_2^+$.
Putting things together we obtain
$\Lambda_{LU}\le N_5^{-}+N_2^{+}$. Now 
(\ref{2016-04-05+1}), (\ref{2016-04-05+2}) imply
(\ref{2016-04-01-4}).

Proof of (\ref{2016-04-01-2}).
Every heavy unlucky triangle contains at least two heavy unlucky paths. Hence the number of  such triangles is at most $N_2^+/2$.
The number of light unlucky triangles is at most $N_3^{-}$. Hence
$\Delta_U\le N_3^{-}+N_2^{+}/2=o_P(S_Y)$.

Proof of (\ref{2016-04-01-1}).
Given a light lucky triangle $x\sim y\sim z\sim x$ of $G$, let
${\cal H}^{x,y,z}_{1}\subset {\cal H}_1$ denote the
set of copies of $H_1$ defining this triangle. Fix an
element  $H_1^*\in {\cal H}^{x,y,z}_{1}$. It is
the complete bipartite graph on the bipartition 
$\{x,y,z\}\cup\{w\}$ for some $w\in W$.
All the other elements of 
${\cal H}^{x,y,z}_1$ are called duplicates. 
We claim that the total number of duplicates is at 
most $N_5^{-}$. Indeed, for any duplicate  
$H_1'\in {\cal H}^{x,y,z}_{1}$ with bipartition 
denoted by 
$\{x,y,z\}\cup\{w'\}$, the union $H_1^*\cup H_1'$ is the complete bipartite graph on $\{x,y,z\}\cup\{w,w'\}$. We remove the edge $\{z,w'\}$ and obtain a copy of $H_5$. We conclude that the number of light lucky triangles is at least $N_1^{-}-N_5^{-}$. 
Hence $\Delta_L\ge N_1^{-}-N_5^{-}=N_1-N_1^{+}-N_5^{-}$.
These inequalities in combination with
(\ref{2016-04-05+1}), (\ref{2016-04-05+2})
and the simple inequality $\Delta_L\le N_1$ imply 
(\ref{2016-04-01-1}).

In the proof we use the fact that $n=o_P(S_Y)$ and $m=o_P(S_X)$.

\end{proof}

\begin{proof}[Proof of Remark \ref{R1}]
For $\alpha<1$,  random variables $S_X(c_x\Gamma(1-\alpha)m)^{-1/\alpha}$ 
and $S_Y(c_y\Gamma(1-\alpha)n)^{-1/\alpha}$ converge in distribution to 
independent and identically distributed  $\alpha$ stable random variables, 
say $Z_1, Z_2$, having the Laplace transform $s\to\E e^{-sZ_1}=e^{-
s^{\alpha}}$, see Theorem 2 of Section 6 of Chapter XIII of 
\cite{FellerII}. Here  $\Gamma$ is Euler's Gamma function. Hence the 
statement (i).

For $\alpha=1$, there exist deterministic sequences 
$b_{m,x}=(c_x+o(1))\ln m$ and
$b_{n,y}=(c_y+o(1))\ln n$ such that the random variables 
$m^{-1}S_x - b_{m,x}$ and $n^{-1}S_Y-b_{n,y}$ converge in distribution 
to independent asymmetric stable random variables with the characteristic
 exponent $\alpha=1$, see Theorem 3 of Section 5 of Chapter XVII 
of \cite{FellerII}. Hence the statement (ii).
\end{proof}

\section{Appendix }

In Appendix A we place auxiliary lemmas. Proofs are given in Appendix B. We remark that
Lemmas \ref{L2} and  \ref{L5}, \ref{unlucky+} refer to the notation of the proofs of   Theorems \ref{T1} and \ref{T2} respectively.

\subsection{Appendix A}

\begin{lem}\label{L1}
Let $X_1,X_2,\dots$ be a sequence of non-negative random variables converging in
 distribution to a random variable $X$. 
Assume that $\E X=\infty$. Then for some positive nonrandom sequence $\{\phi_n\}$ converging to $+\infty$ we have
\begin{equation}\label{phi}
 \PP\bigl(X_{n,1}+\cdots +X_{n,n}>\phi_nn\bigr)=1-o(1).
 \end{equation}
 Here $X_{n,1},\dots, X_{n,n}$ are iid copies of $X_n$. 
\end{lem}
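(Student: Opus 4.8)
The plan is to prove Lemma \ref{L1} by a truncation argument combined with a diagonal selection of the cutoff level. The statement says: if $X_n \to X$ in distribution with $\E X = \infty$, then there is a deterministic sequence $\phi_n \to \infty$ with $\PP(X_{n,1}+\dots+X_{n,n} > \phi_n n) = 1-o(1)$, where the summands are i.i.d.\ copies of $X_n$. The key point is that even though $\E X_n$ need not converge to $\E X$ (and could even be infinite or behave badly), for each fixed truncation level $K$ the truncated variable $X_n \wedge K$ has expectation converging to $\E(X\wedge K)$, and this latter quantity tends to $\E X = \infty$ as $K \to \infty$.

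First I would fix $K>0$ and set $\mu_n(K) := \E(X_n \wedge K)$. Since $X_n \to X$ in distribution and $x \mapsto x\wedge K$ is bounded and continuous, $\mu_n(K) \to \mu(K) := \E(X\wedge K)$ as $n\to\infty$. By the weak law of large numbers (or Chebyshev, since $X_n\wedge K$ has variance at most $K^2$), for the i.i.d.\ copies we have
\begin{displaymath}
\PP\Bigl(\tfrac{1}{n}\sum_{i=1}^n (X_{n,i}\wedge K) > \tfrac{1}{2}\mu(K)\Bigr) \to 1
\end{displaymath}
as $n\to\infty$, for each fixed $K$. Since $X_{n,i} \ge X_{n,i}\wedge K$, this already gives $\PP(\sum_i X_{n,i} > \tfrac12 \mu(K) n) \to 1$ for each fixed $K$. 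Because $\E X = \infty$, monotone convergence gives $\mu(K) \uparrow \infty$ as $K\to\infty$.

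The remaining (and only slightly delicate) step is to let $K$ grow with $n$ in a controlled way, i.e.\ a standard diagonalization. Set $c_K := \tfrac12 \mu(K)$, so $c_K \uparrow \infty$. For each integer $K$, by the above there is $n(K)$ such that for all $n \ge n(K)$ we have $\PP(\tfrac1n\sum_i X_{n,i} > c_K) \ge 1 - 1/K$; we may assume $n(K)$ strictly increasing in $K$. Now define $\phi_n := c_K$ for $n(K) \le n < n(K+1)$ (and $\phi_n := c_1$ for $n < n(1)$). Then $\phi_n \to \infty$ since $c_K \to \infty$, and by construction $\PP(X_{n,1}+\dots+X_{n,n} > \phi_n n) \ge 1 - 1/K \to 1$. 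This yields (\ref{phi}). The main (minor) obstacle is just making sure the selection is done correctly so that $\phi_n$ is genuinely deterministic and genuinely diverges; there is no analytic difficulty beyond the weak law and monotone convergence, since boundedness of the truncated variables trivializes the concentration estimate.
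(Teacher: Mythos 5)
Your proof is correct and follows essentially the same route as the paper: truncate, apply Chebyshev's inequality (weak law) to the bounded truncated variables, and exploit that $\E(X\wedge K)\uparrow\infty$ by letting the cutoff grow slowly with $n$. The only cosmetic difference is that you fix the truncation level and then diagonalize to build $\phi_n$, whereas the paper directly selects a slowly growing cutoff $A_n$ with $\Var(X[A_n])=o(n)$ and $\E X_n[A_n]-\E X[A_n]=o(1)$ before applying the same Chebyshev bound; your use of $x\wedge K$ even sidesteps the continuity-point issue implicit in the paper's truncation $X\mathbb{I}_{\{X\le A\}}$.
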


\begin{lem}\label{L7}
Let $\{Z_n\}$ and $\eta=\{\eta_n\}$ be sequences
of random variables defined on the same probability 
space. 
Let $\E_{\eta}$ denote the conditional expectation given $\eta$.
Assume that 
$\E_{\eta} Z_n=0$ implies
$Z_n=0$. Then 

(i) $\E_{\eta}(Z_n-\E_{\eta}Z_n)^2=o_P\bigl((\E_{\eta}Z_n)^2\bigr)$ implies
$Z_n=(1+o_P(1))\E_{\eta}Z_n$;

(ii) $Z_n=O_P(\E_{\eta}Z_n)$.
\end{lem}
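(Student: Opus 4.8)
The plan is to treat the two parts via Chebyshev-type bounds applied to the conditional measure $\PP_\eta$, then integrate out $\eta$ carefully so that $o_P$ statements are preserved. First observe that the hypothesis ``$\E_\eta Z_n=0$ implies $Z_n=0$'' lets us divide by $\E_\eta Z_n$ on the event $\{\E_\eta Z_n\ne 0\}$ without losing anything, since on its complement both $Z_n$ and $\E_\eta Z_n$ vanish and the claimed relation $Z_n=(1+o_P(1))\E_\eta Z_n$ holds trivially (as does $Z_n=O_P(\E_\eta Z_n)$). So from now on we may work on $\{\E_\eta Z_n\ne 0\}$.

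For part (i), fix $\varepsilon>0$ and $\delta>0$. By the conditional Chebyshev inequality, on any $\eta$-measurable event,
\begin{displaymath}
\PP_\eta\bigl(|Z_n-\E_\eta Z_n|>\varepsilon|\E_\eta Z_n|\bigr)
\le
\frac{\E_\eta(Z_n-\E_\eta Z_n)^2}{\varepsilon^2(\E_\eta Z_n)^2}.
\end{displaymath}
Call the right-hand side $R_n$; by hypothesis $R_n=o_P(1)$. Taking $\eta$-expectations and using the tower property,
\begin{displaymath}
\PP\bigl(|Z_n-\E_\eta Z_n|>\varepsilon|\E_\eta Z_n|\bigr)
\le
\E\min\{1,R_n/\varepsilon^2\},
\end{displaymath}
and since $\min\{1,R_n/\varepsilon^2\}\to 0$ in probability and is bounded by $1$, dominated convergence gives that the right-hand side tends to $0$. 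As $\varepsilon$ was arbitrary this is exactly $Z_n/\E_\eta Z_n\to 1$ in probability, i.e. $Z_n=(1+o_P(1))\E_\eta Z_n$.

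For part (ii) the argument is the same but with the first absolute moment in place of the variance. Conditionally, $\E_\eta|Z_n|\le \E_\eta Z_n + \E_\eta|Z_n-\E_\eta Z_n|$; here one should be slightly careful because $Z_n$ and $\E_\eta Z_n$ need not be nonnegative, but in all applications in the paper $Z_n\ge 0$, so $\E_\eta|Z_n|=\E_\eta Z_n$ and Markov's inequality conditionally gives $\PP_\eta(Z_n>K\E_\eta Z_n)\le 1/K$ for every $K>0$ on $\{\E_\eta Z_n\ne 0\}$. Integrating out $\eta$ yields $\PP(Z_n>K\E_\eta Z_n)\le 1/K$ uniformly in $n$, which is precisely $Z_n=O_P(\E_\eta Z_n)$. (If one does not want to assume $Z_n\ge0$, replace the bound by $\E_\eta|Z_n|\le|\E_\eta Z_n|+(\E_\eta(Z_n-\E_\eta Z_n)^2)^{1/2}$ via Jensen and absorb the second term using the part-(i) hypothesis when it is available; but the nonnegative case suffices for the paper.) The only mild obstacle is the bookkeeping around the event $\{\E_\eta Z_n=0\}$ and the passage from conditional to unconditional probability, handled in both parts by the bounded convergence / uniform Markov bound above; there is no real analytic difficulty.
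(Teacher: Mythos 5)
Your proposal is correct and follows essentially the same route as the paper: conditional Chebyshev for (i) and conditional Markov for (ii), with the hypothesis ``$\E_{\eta}Z_n=0\Rightarrow Z_n=0$'' used to discard the event $\{\E_{\eta}Z_n=0\}$ and bounded convergence to pass from conditional to unconditional probabilities. Your explicit remark that (ii) implicitly needs $Z_n\ge 0$ (true in all applications) is a point the paper's proof also relies on via Markov's inequality without stating it.
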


\begin{lem}\label{L8} Let $t>0$. Let $Z$ be a 
 non-negative random variable with $\E Z<\infty$.

(i) There exists a positive increasing function
$\psi(\cdot)$ such $\psi(t)\uparrow +\infty$ as $t\uparrow +\infty$ and
$\E Z\psi(Z)<\infty$. Furthermore, there exists a positive decreasing 
function $\varepsilon(\cdot)$ such that
$\varepsilon(s)\downarrow 0$ as $s\uparrow +\infty$ and 
$\PP\bigl(Z>s\varepsilon(s)\bigr)=o(s^{-1})$ for $s\to+\infty$.

(ii) Let $Z_1,Z_2,\dots$ be iid copies of $Z$. Let $n\to+\infty$.
 Then $Z_1^{1+t}+\cdots+Z_n^{1+t}=o_P(n^{1+t})$. Furthermore, 
for $\varepsilon(\cdot)$
of statement (i), we have
$\PP(\max_{1\le i\le n}Z_i>n\varepsilon(n))=o(1)$. 
\end{lem}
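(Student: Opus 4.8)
The plan is to prove the two assertions of (i) by a standard de~la Vall\'ee-Poussin type (truncation) construction and then deduce (ii) from them together with a union bound.

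For the first assertion of (i) I would use that $\E Z<\infty$ forces $\E Z{\mathbb I}_{\{Z\ge u\}}\to 0$ as $u\to+\infty$ (dominated convergence). Hence one can pick integers $1\le k_1<k_2<\cdots$ with $\E Z{\mathbb I}_{\{Z\ge k_j\}}\le 2^{-j}$ and set $\psi(t)=1+\#\{j\ge 1:\,k_j\le t\}$. This $\psi$ is positive, non-decreasing, tends to $+\infty$ because $k_j\to+\infty$, and from $Z\psi(Z)=Z+\sum_{j\ge 1}Z{\mathbb I}_{\{Z\ge k_j\}}$ one gets $\E Z\psi(Z)\le\E Z+\sum_{j\ge1}2^{-j}<\infty$. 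A strictly increasing version is obtained by adding the bounded increasing term $t/(t+1)$, and the auxiliary bound $\psi(t)<t^{1/4}$ used in the proof of Theorem~\ref{T1-2} is obtained by further replacing $\psi$ with $\min\{\psi(t),t^{1/4}\}$; neither modification spoils integrability.

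For the second assertion of (i) I would start from the elementary inequality $u\PP(Z>u)\le\E Z{\mathbb I}_{\{Z>u\}}$, so that $\theta(u):=\sup_{v\ge u}v\PP(Z>v)$ is finite, non-increasing and $\theta(u)\to0$. Then I set $\rho(u)=u^{-1}+\theta(u)$ and $\varepsilon(s)=\sqrt{\rho(\sqrt s)}$. This $\varepsilon$ is positive, strictly decreasing and $\varepsilon(s)\to0$; moreover $\varepsilon(s)\ge s^{-1/4}$, so $s\varepsilon(s)\ge s^{3/4}\ge\sqrt s$ for all large $s$, and for such $s$, using $v\PP(Z>v)\le\theta(v)\le\rho(v)$ with $v=s\varepsilon(s)$ and monotonicity of $\rho$,
\begin{displaymath}
s\PP(Z>s\varepsilon(s))=\frac{s\varepsilon(s)\PP(Z>s\varepsilon(s))}{\varepsilon(s)}\le\frac{\rho(s\varepsilon(s))}{\varepsilon(s)}\le\frac{\rho(\sqrt s)}{\sqrt{\rho(\sqrt s)}}=\sqrt{\rho(\sqrt s)}=\varepsilon(s)\longrightarrow 0,
\end{displaymath}
i.e. $\PP(Z>s\varepsilon(s))=o(s^{-1})$. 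The one genuinely delicate point, which I expect to be the main obstacle, is choosing the right $\varepsilon$: the ``square-root'' device $\varepsilon(s)=\sqrt{\rho(\sqrt s)}$ makes one and the same quantity govern both $\varepsilon(s)\to0$ and the tail probability, whereas a naive choice built directly from $\psi$ (such as $1/\psi$) can fail because $\psi$ may grow very irregularly.

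Finally, for (ii): the maximum statement is immediate from (i) via $\PP(\max_{i\le n}Z_i>n\varepsilon(n))\le n\PP(Z>n\varepsilon(n))=o(1)$; the same union bound with $n\varepsilon(n)$ replaced by $\delta n$, together with $\delta n\PP(Z>\delta n)\le\E Z{\mathbb I}_{\{Z>\delta n\}}\to0$, gives $\max_{i\le n}Z_i=o_P(n)$ and hence $\max_{i\le n}Z_i^{\,t}=o_P(n^t)$. I then split $\sum_{i\in[n]}Z_i^{1+t}$ according to whether $Z_i\le1$ or $Z_i>1$: the first part is at most $n$, the second at most $(\max_{i\le n}Z_i^{\,t})\sum_{i\in[n]}Z_i{\mathbb I}_{\{Z_i>1\}}$. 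Dividing by $n^{1+t}$, the first contribution is $n^{-t}=o(1)$, while by the law of large numbers $n^{-1}\sum_{i\in[n]}Z_i{\mathbb I}_{\{Z_i>1\}}\to\E Z{\mathbb I}_{\{Z>1\}}<\infty$, so the second contribution is $o_P(1)\cdot O_P(1)=o_P(1)$, whence $\sum_{i\in[n]}Z_i^{1+t}=o_P(n^{1+t})$.
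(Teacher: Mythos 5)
Your proof is correct, and in the one place where the lemma has real content it takes a genuinely different route from the paper. The construction of $\psi$ is the same de la Vall\'ee-Poussin truncation device in both arguments (the paper uses a step function equal to $k$ on $[s_{k-1},s_k]$ with $\E Z{\mathbb I}_{\{Z>s_k\}}\le 2^{-k}$; your $\psi(t)=1+\#\{j:k_j\le t\}$ is the same object in different clothing), and part (ii) is also essentially the paper's argument: the same union bound plus Markov for the maximum, and the bound $\sum_i Z_i^{1+t}\le\bigl(\sum_i Z_i\bigr)\max_j Z_j^t$ — your split at $Z_i\le 1$ is an unnecessary but harmless refinement, since $Z_i^{1+t}\le Z_i\max_j Z_j^t$ holds for all $Z_i\ge 0$. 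The difference is in the second assertion of (i): the paper ties $\varepsilon$ to $\psi$, choosing $\varepsilon$ ``decaying slowly enough'' that $s\varepsilon(s)\to\infty$ and $\varepsilon(s)\psi\bigl(s\varepsilon(s)\bigr)\to\infty$, and then applies Markov's inequality to $Z\psi(Z)$; the existence of such a slowly decaying $\varepsilon$ is asserted rather than exhibited. You instead bypass $\psi$ entirely and build $\varepsilon$ explicitly from the tail functional $\theta(u)=\sup_{v\ge u}v\PP(Z>v)\le\E Z{\mathbb I}_{\{Z>u\}}\to 0$, via $\varepsilon(s)=\sqrt{\rho(\sqrt s)}$ with $\rho(u)=u^{-1}+\theta(u)$; the chain $s\PP(Z>s\varepsilon(s))\le\rho(\sqrt s)/\varepsilon(s)=\varepsilon(s)\to 0$ checks out, using $s\varepsilon(s)\ge s^{3/4}\ge\sqrt s$ and monotonicity of $\rho$. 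What your version buys is a fully explicit, self-contained $\varepsilon$ and an argument that decouples the two halves of (i); what the paper's version buys is brevity and the fact that the same function $\psi$ serves double duty (it is reused with the extra property $\psi(t)<t^{1/4}$ in the proof of Theorem \ref{T1-2}, which your remark about replacing $\psi$ by a capped version also accommodates).
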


We remark that the functions $\psi(\cdot)$, $\varepsilon(\cdot)$
depend  on the probability distribution of $Z$.

\begin{lem}\label{L2}
Let $X_1,X_2,\dots$ be a sequence of non-negative random variables converging in
 distribution to a random variable $X$. 
Assume that $\E X^3=\infty$ and 
$0<\E X^2<\infty$. Assume that $\E X_n^2<\infty$, for each $n=1,2,\dots$, and
$\lim_n\E X_n^2=\E X^2$.
Let $\{m_n, n\ge 1\}$ be an integer sequence and, for every $n$, let  
 $X_{n,1},\dots, X_{n,m_n}$ be iid copies of $X_n$.
Let $n\to+\infty$. Assume that 
 $m_n\uparrow+\infty$.
Denote $S_{X,n}=\sum_{j\in[m_n]}X^3_{n,j}$. We have
 \begin{eqnarray}\label{L2-2}
m_n^{-3}\sum_{\{j,k\}\subset [m_n]}X^3_{n,j}X^3_{n,k} 
&=&
o_P\bigl(S_{X,n}\bigr),
\\
\label{L3-1}
 m_n^{-3}\sum_{\{j,k,r\}\subset [m_n]}X^2_{n,j}X^2_{n,k}X^2_{n,r} 
 &=&
 o_P\bigl(S_{X,n}\bigr),
\\
\label{L4-1}
 m_n^{-1}\sum_{\{j,k\}\subset [m_n]}X^2_{n,j}X^2_{n,k} 
 &=&
 o_P\bigl(S_{X,n}\bigr).
\end{eqnarray}
\end{lem}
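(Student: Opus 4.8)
The plan is to deduce all three estimates from a single mechanism: since $\E X^3 = \infty$, the sum $S_{X,n} = \sum_{j\in[m_n]} X_{n,j}^3$ grows superlinearly in $m_n$ by Lemma \ref{L1} (applied to the variables $X_n^3$, whose limit $X^3$ has infinite mean; one checks $X_n^3 \to X^3$ in distribution). On the other hand, each of the three left-hand sides will be bounded, after elementary algebraic manipulation, by a constant times $m_n \cdot S_{X,n}^{(\text{lower order})}$ — a product of $S_{X,n}$ with something of order $o_P(m_n)$ after normalization, or by an expression linear in $m_n$ times a sum of lower moments that is $O_P(m_n)$ by the law of large numbers. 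Dividing by $S_{X,n}$ and using superlinearity gives $o_P(1)$.

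First, for (\ref{L2-2}): expand $\sum_{\{j,k\}} X_j^3 X_k^3 = \frac12\bigl((\sum_j X_j^3)^2 - \sum_j X_j^6\bigr) \le \frac12 S_{X,n}^2$, hence $m_n^{-3}\sum_{\{j,k\}} X_j^3 X_k^3 \le \frac12 m_n^{-3} S_{X,n}^2 = S_{X,n}\cdot \tfrac12 (S_{X,n}/m_n^{3})$, wait — more carefully, I would instead write $m_n^{-3} S_{X,n}^2 = (S_{X,n}/m_n)\cdot(S_{X,n}/m_n^2)$ and observe that to beat $S_{X,n}$ I need $S_{X,n} = o_P(m_n^2)$; since $S_{X,n} = \sum_j X_{n,j}^3$ and $X_{n,j}^3$ has a well-defined distributional limit, Lemma \ref{L8}(ii) applied to the variables $X_{n,j}$ (with the exponent chosen so that $1+t$ lands on a moment that is still o-controlled) gives $S_{X,n} = o_P(m_n^{1+t})$ for a suitable $t\le 2$; but in fact the cleanest route is: truncate at a level $K$, split $S_{X,n} = \sum_j X_j^3 \mathbb{I}_{X_j \le K} + \sum_j X_j^3 \mathbb{I}_{X_j > K}$, and correspondingly bound $m_n^{-3}\sum_{\{j,k\}}X_j^3 X_k^3$ by terms where at least one factor is truncated (contributing $\le K^3 m_n^{-3} m_n S_{X,n} \to 0$ relative to $S_{X,n}$ once divided) plus the term where both exceed $K$, which is $\le (m_n^{-1}\sum_j X_j^3\mathbb{I}_{X_j>K})^2 \cdot m_n / S_{X,n}$; choosing $K$ large makes the tail mean $\E X^3\mathbb{I}_{X>K}$ as small as we like... but $\E X^3 = \infty$, so this truncation argument must be run through Lemma \ref{L1}'s construction rather than naively. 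So the honest plan is: invoke Lemma \ref{L1} to get the sequence $\phi_{m_n}\uparrow\infty$ with $S_{X,n} > \phi_{m_n} m_n$ w.h.p., and invoke Lemma \ref{L8}(ii) (exponent $t=2$, i.e. $1+t=3$) to get $S_{X,n} = o_P(m_n^3)$; then $m_n^{-3}S_{X,n}^2/S_{X,n} = S_{X,n}/m_n^3 = o_P(1)$, giving (\ref{L2-2}).

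For (\ref{L3-1}) and (\ref{L4-1}), the second moments are in play, and here I use the hypotheses $\E X^2 < \infty$ and $\E X_n^2 \to \E X^2$. By the (triangular-array) law of large numbers, $m_n^{-1}\sum_j X_{n,j}^2 = \E X^2 + o_P(1) = O_P(1)$. For (\ref{L4-1}): $m_n^{-1}\sum_{\{j,k\}} X_j^2 X_k^2 \le m_n^{-1}\cdot\frac12(\sum_j X_j^2)^2 = \frac{m_n}{2}(m_n^{-1}\sum_j X_j^2)^2 = O_P(m_n)$, which is $o_P(S_{X,n})$ by superlinearity (Lemma \ref{L1}: $S_{X,n} > \phi_{m_n} m_n$ w.h.p.). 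For (\ref{L3-1}): $m_n^{-3}\sum_{\{j,k,r\}} X_j^2 X_k^2 X_r^2 \le m_n^{-3}\cdot\frac{1}{6}(\sum_j X_j^2)^3 = \frac16 (m_n^{-1}\sum_j X_j^2)^3 = O_P(1) = o_P(S_{X,n})$, since $S_{X,n}\to\infty$ in probability. Finally, one assembles: each bound divided by $S_{X,n}$ tends to $0$ in probability, which is precisely the claimed $o_P(S_{X,n})$.

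The main obstacle is (\ref{L2-2}): unlike the other two, its dominant term genuinely involves the third moment (which is infinite), so one cannot bound it by a bounded-moment law-of-large-numbers expression times $m_n$. The delicate point is obtaining $S_{X,n} = o_P(m_n^3)$ uniformly enough — this is where Lemma \ref{L8}(ii) with exponent $1+t=3$ is essential, and one must be careful that its hypotheses (a non-negative limit variable with finite \emph{first} moment, applied to $X_{n,j}$ so that $X_{n,j}^3$ is what gets summed) are exactly met; here the limit of $X_{n,j}$ is $X$ with $\E X < \infty$ trivially since even $\E X^2 < \infty$, so Lemma \ref{L8}(ii) applies and yields $\sum_j X_{n,j}^3 = o_P(m_n^3)$. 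Everything else is routine second-moment bookkeeping.
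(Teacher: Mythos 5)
Your handling of (\ref{L3-1}) and (\ref{L4-1}) is essentially the paper's own argument: bound the left-hand sides by powers of $S_{*n}=m_n^{-1}\sum_{j}X_{n,j}^2$, observe $S_{*n}=O_P(1)$, and use Lemma \ref{L1} (applied to $X_n^3\to X^3$ in distribution with $\E X^3=\infty$) to make $S_{X,n}$ superlinear in $m_n$. One small remark: the triangular-array law of large numbers you invoke ($S_{*n}=\E X^2+o_P(1)$) is not justified by the stated hypotheses (there is no control of the variance of $X_n^2$), but it is also not needed — $\E S_{*n}=\E X_n^2\to\E X^2$ and Markov's inequality already give $S_{*n}=O_P(1)$, which is all your bounds use, and is exactly what the paper does.

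The genuine gap is in (\ref{L2-2}). Reducing it to the claim $S_{X,n}=o_P(m_n^3)$ via $m_n^{-3}\sum_{\{j,k\}}X_{n,j}^3X_{n,k}^3\le \tfrac12 m_n^{-3}S_{X,n}^2$ is fine, but your justification of that claim — citing Lemma \ref{L8}(ii) with $1+t=3$ ``because the limit of $X_{n,j}$ is $X$ with $\E X<\infty$'' — does not go through. Lemma \ref{L8}(ii) concerns iid copies of a single fixed variable $Z$ with $\E Z<\infty$, whereas $X_{n,1},\dots,X_{n,m_n}$ form a triangular array whose law changes with $n$; convergence in distribution to an integrable limit alone does not imply the conclusion. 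For instance, if $X_n$ equals $X$ with probability $1-1/m_n$ and equals $m_n^2$ with probability $1/m_n$, then $X_n\to X$ in distribution, yet with probability bounded away from zero some $X_{n,j}=m_n^2$ and $S_{X,n}\ge m_n^6\gg m_n^3$. What rules this out here is precisely the hypothesis $\E X_n^2\to\E X^2$, which your justification never uses. The repair is short: $\PP(\max_j X_{n,j}\ge \varepsilon m_n)\le m_n\,\E X_n^2/(\varepsilon m_n)^2=o(1)$ and $\sum_j X_{n,j}^2=O_P(m_n)$ give $S_{X,n}\le(\max_j X_{n,j})\sum_j X_{n,j}^2=o_P(m_n^2)$, more than enough. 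This is in spirit what the paper does for (\ref{L2-2}): it truncates $\max_j X_{n,j}$ at $m_n^{4/7}$ using the second-moment Markov bound and then bounds one cube in each product, obtaining $m_n^{-3}\sum_{\{j,k\}}X_{n,j}^3X_{n,k}^3\le m_n^{-2/7}S_{X,n}$ on an event of probability $1-o(1)$, without ever needing a law-of-large-numbers statement for the cubes.
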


\begin{lem}\label{L5}
Assume that $\E X_1^2<\infty$ and $\E Y_1<\infty$.

(i) For $\E X_1^3=\infty$ we have 
${\tilde \E}\bigl( N_1-{\tilde \E} N_1 \bigr)^2=o_P(S_X^2)$.

(ii) We have ${\tilde \E}N_1=6^{-1}\beta^{-3/2}b_1^3 S_X+o_P(S_X)$.
\end{lem}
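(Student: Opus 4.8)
I establish (ii) first, since it is used in (i).

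\emph{Part (ii).} Write $N_1=\sum_{w\in W}\binom{|D_w|}{3}$, so ${\tilde\E}N_1=\sum_{w\in W}\sum_{\{x,y,z\}\subset V}p_{wx}p_{wy}p_{wz}$, and put $r_{wv}:=X_wY_v(mn)^{-1/2}$ and $\Sigma_Y:=\sum_{\{x,y,z\}\subset V}Y_xY_yY_z$; then $p_{wv}=\min\{1,r_{wv}\}$ and $\sum_{w\in W}\sum_{\{x,y,z\}\subset V}r_{wx}r_{wy}r_{wz}=(mn)^{-3/2}S_X\,\Sigma_Y$. From $p_{wv}\le r_{wv}$ we get the upper bound ${\tilde\E}N_1\le(mn)^{-3/2}S_X\,\Sigma_Y$. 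For the matching lower bound I would use the elementary inequality $0\le r_1r_2r_3-p_1p_2p_3\le r_1r_2r_3\bigl({\mathbb I}_{r_1>1}+{\mathbb I}_{r_2>1}+{\mathbb I}_{r_3>1}\bigr)$, valid whenever $p_i=\min\{1,r_i\}$, which after summing over $w$ and $\{x,y,z\}$ and bounding crudely yields
\[
0\le\frac{S_X\,\Sigma_Y}{(mn)^{3/2}}-{\tilde\E}N_1\le\frac{S_X}{2(mn)^{3/2}}\Bigl(\sum_{v\in V}Y_v\Bigr)^{2}\max_{w\in W}\ \sum_{v\in V:\,X_wY_v>\sqrt{mn}}Y_v.
\]
By Lemma \ref{L8} applied to $X_1^2$ (finite mean) there is a deterministic $\eta_m\downarrow 0$ with $\max_{w\in W}X_w\le\eta_m\sqrt m$ with probability $1-o(1)$; on that event $X_wY_v>\sqrt{mn}$ forces $Y_v>t_n:=\sqrt n/\eta_m\to\infty$, so the last maximum is at most $\sum_{v\in V}Y_v{\mathbb I}_{Y_v>t_n}=o_P(n)$ (here $\E Y_1<\infty$ gives $\E Y_1{\mathbb I}_{Y_1>t_n}=o(1)$, and then Markov). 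Since $\sum_{v\in V}Y_v=O_P(n)$ and $n^3(mn)^{-3/2}=(n/m)^{3/2}\to\beta^{-3/2}$, the right-hand side is $o_P(S_X)$, hence ${\tilde\E}N_1=(mn)^{-3/2}S_X\,\Sigma_Y+o_P(S_X)$. Finally, Newton's identity gives $6\Sigma_Y=\bigl(\sum_vY_v\bigr)^3-3\bigl(\sum_vY_v\bigr)\bigl(\sum_vY_v^2\bigr)+2\sum_vY_v^3$; the weak law of large numbers yields $\sum_vY_v=b_1n+o_P(n)$, while Lemma \ref{L8}(ii) applied to $Y_1$ yields $\sum_vY_v^2=o_P(n^2)$ and $\sum_vY_v^3=o_P(n^3)$, so $\Sigma_Y=\tfrac16b_1^3n^3+o_P(n^3)$. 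Combining, ${\tilde\E}N_1=6^{-1}\beta^{-3/2}b_1^3S_X+o_P(S_X)$.

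\emph{Part (i).} Conditionally on $({\mathbb X},{\mathbb Y})$ the edges of $H$ are independent, so the variables $\binom{|D_w|}{3}$, $w\in W$, which depend on disjoint edge sets, are conditionally independent; hence ${\tilde\E}(N_1-{\tilde\E}N_1)^2=\sum_{w\in W}{\tilde\E}\bigl(\binom{|D_w|}{3}-{\tilde\E}\binom{|D_w|}{3}\bigr)^2$. Writing $\xi_{xw}:={\mathbb I}_{xw}-p_{wx}$ and expanding each product $(p_{wx}+\xi_{xw})(p_{wy}+\xi_{yw})(p_{wz}+\xi_{zw})$, I would write $N_1={\tilde\E}N_1+A_1+A_2+A_3$, where $A_k$ collects the monomials with exactly $k$ factors $\xi$; just as for the decomposition $\Lambda_T=\E_Z\Lambda_T+L_T+Q_T$ in the proof of Theorem \ref{T1-2}, the $A_k$ are pairwise conditionally uncorrelated and each is a sum of conditionally uncorrelated terms, so ${\tilde\E}(N_1-{\tilde\E}N_1)^2={\tilde\E}A_1^2+{\tilde\E}A_2^2+{\tilde\E}A_3^2$. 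With $\mu_w:=\sum_{x\in V}p_{wx}\le{\hat b}_1(n/m)^{1/2}X_w$, a short computation using $\sum_{\{y,z\}}p_{wy}p_{wz}\le\tfrac12\mu_w^2$ gives ${\tilde\E}A_3^2\le{\tilde\E}N_1$, ${\tilde\E}A_2^2\le\tfrac12\sum_{w\in W}\mu_w^4$ and ${\tilde\E}A_1^2\le\tfrac14\sum_{w\in W}\mu_w^5$. Since ${\hat b}_1(n/m)^{1/2}=O_P(1)$ and $\max_{w\in W}X_w=o_P(\sqrt m)$ (Lemma \ref{L8} applied to $X_1^2$), we get $\sum_w\mu_w^4\le O_P(1)(\max_wX_w)S_X=o_P(\sqrt m\,S_X)$ and $\sum_w\mu_w^5\le O_P(1)(\max_wX_w)^2S_X=o_P(m\,S_X)$; and because $\E X_1^3=\infty$, Lemma \ref{L1} applied to $X_1^3$ shows $S_X$ is superlinear in $m$, so $\sqrt m\,S_X=o_P(S_X^2)$, $m\,S_X=o_P(S_X^2)$, and, by part (ii), ${\tilde\E}N_1=O_P(S_X)=o_P(S_X^2)$. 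Therefore ${\tilde\E}(N_1-{\tilde\E}N_1)^2=o_P(S_X^2)$.

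\emph{Main obstacle.} The one genuinely delicate point is the lower bound in part (ii): the truncation in $p_{wv}=\min\{1,X_wY_v/\sqrt{mn}\}$ can be active on pairs built from the heaviest attribute and vertex, and to control its cumulative effect one needs the uniform-in-$w$ estimate $\sum_{v\in V}Y_v{\mathbb I}_{X_wY_v>\sqrt{mn}}=o_P(n)$, which rests on $\max_{w\in W}X_w=o_P(\sqrt m)$ (a consequence of $\E X_1^2<\infty$ via Lemma \ref{L8}) together with $\E Y_1<\infty$. Everything else is routine moment bookkeeping in the spirit of the proofs already given.
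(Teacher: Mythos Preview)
Your proof is correct and follows essentially the same strategy as the paper: conditional independence of the $U_w=\binom{|D_w|}{3}$ over $w\in W$, a Hoeffding-type decomposition into linear, quadratic and cubic parts, and the truncation argument for (ii) based on $\max_{w}X_w=o_P(\sqrt m)$ (from $\E X_1^2<\infty$) combined with $\E Y_1<\infty$. The only differences are cosmetic: the paper handles the diagonal terms in (ii) by the direct calculation $Z_w^3-6{\tilde\E}U_w$ rather than Newton's identity, and bounds $\sum_w X_w^r$ for $r=3,4,5$ via H\"older's inequality against $\sum_w X_w^6\le S_X^2$ rather than via $(\max_w X_w)^{r-3}S_X$; both routes yield $o_P(S_X^2)$ once $m=o_P(S_X)$ is invoked from Lemma~\ref{L1}.
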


\begin{lem}\label{unlucky+} Assume that $\E X_1^2<\infty$, $\E Y_1<\infty$.

(i) For  $\E Y_1^2=\infty$ we have
${\tilde \E}\bigl(N_2-{\tilde \E}N_2\bigr)^2
=
o_P(S_Y^2)$.

(ii) We have  
${\tilde \E}N_2=2^{-1}a_2^2b_1^2S_Y+o_P(S_Y)$.
\end{lem}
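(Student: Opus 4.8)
The plan is to control the conditional (given $({\mathbb X},{\mathbb Y})$) first two moments of $N_2$, the number of copies in $H$ of the bipartite four-edge path $H_2$. Writing a copy as $v_1-w_1-v_2-w_2-v_3$ with distinct $v_1,v_2,v_3\in V$, distinct $w_1,w_2\in W$, and $v_2$ the central white vertex, we have
\[
N_2=\tfrac12\sum{\mathbb I}_{v_1w_1}{\mathbb I}_{v_2w_1}{\mathbb I}_{v_2w_2}{\mathbb I}_{v_3w_2},
\]
the sum over ordered such tuples. Conditionally on $({\mathbb X},{\mathbb Y})$ the edges of $H$ are independent with $\PP(v\sim w)=p_{vw}\le\lambda_{vw}=Y_vX_w(nm)^{-1/2}$, and I shall freely use the following facts, each holding with probability $1-o(1)$ or almost surely: ${\hat a}_r\to a_r$ ($r=1,2$) and ${\hat b}_1\to b_1$; $S_Y$ is superlinear in $n$, that is $n=o_P(S_Y)$ and $S_Y\to+\infty$ in probability, while also $S_Y=o_P(n^2)$ (Lemma \ref{L1} together with $\max_jY_j=o_P(n)$); $\max_iX_i=o_P(\sqrt m)$, whence $S_X\le(\max_iX_i)\sum_iX_i^2=o_P(m^{3/2})$ (Lemma \ref{L8}); and the self-bounding inequalities
\[
\sum_{j\in[n]}Y_j^3\le\bigl(\max_jY_j\bigr)S_Y\le S_Y^{3/2},
\qquad
\sum_{j\in[n]}Y_j^4\le S_Y^2,
\qquad
\sum_{i\in[m]}X_i^r\le\bigl(\max_iX_i\bigr)^{r-3}S_X\ \ (r\ge3).
\]

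For part (ii) I would bound each $p_{vw}$ above by $\lambda_{vw}$ and sum out the attributes via $\sum_w\lambda_{v_1w}\lambda_{v_2w}=n^{-1}Y_{v_1}Y_{v_2}{\hat a}_2$, which gives (after dropping the forbidden coincidences $w_1=w_2$ and $v_1=v_3$, only enlarging the bound)
\[
{\tilde\E}N_2\le\frac{{\hat a}_2^2}{2n^2}\sum_{v_2\in[n]}Y_{v_2}^2\bigl(n{\hat b}_1-Y_{v_2}\bigr)^2
=\frac{{\hat a}_2^2}{2n^2}\Bigl(n^2{\hat b}_1^2S_Y-2n{\hat b}_1\!\sum_jY_j^3+\sum_jY_j^4\Bigr);
\]
the last two terms are $o_P(n^2S_Y)$ by the self-bounding inequalities together with $n=o_P(S_Y)$ and $S_Y=o_P(n^2)$, so ${\tilde\E}N_2\le2^{-1}a_2^2b_1^2S_Y+o_P(S_Y)$. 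For the matching lower bound I would use $p_{vw}\ge\lambda_{vw}{\mathbb I}_{\lambda_{vw}\le1}$, restrict the two leaves to $Y_{v_1},Y_{v_3}\le C_n$ for a slowly growing $C_n$ — costing only a factor $(b_1-o_P(1))^2$ since $n^{-1}\sum_{v:\,Y_v\le C_n}Y_v\to b_1$, and making $\lambda_{v_iw}=o_P(1)$ at every leaf — and note that for the central vertex, on the event $\max_jY_j\le\varepsilon_nt_n^{-1}n$, any saturated edge $v_2w$ has $X_w>\sqrt{nm}/Y_{v_2}\ge\sqrt{nm}/(\varepsilon_nt_n^{-1}n)\to+\infty$, so that $m^{-1}\sum_wX_w^2{\mathbb I}_{\lambda_{v_2w}\le1}\ge{\hat a}_2-o_P(1)$ uniformly in $v_2$. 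This yields ${\tilde\E}N_2\ge2^{-1}a_2^2b_1^2S_Y-o_P(S_Y)$, hence part (ii).

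For part (i) the key structural observation is that, by conditional independence of the edges, ${\tilde\E}({\mathbb I}_\gamma{\mathbb I}_{\gamma'})={\tilde\E}{\mathbb I}_\gamma\cdot{\tilde\E}{\mathbb I}_{\gamma'}$ whenever two copies $\gamma,\gamma'$ of $H_2$ share no edge of $H$, while if they share an edge then ${\tilde\E}({\mathbb I}_\gamma{\mathbb I}_{\gamma'})-{\tilde\E}{\mathbb I}_\gamma\cdot{\tilde\E}{\mathbb I}_{\gamma'}\le{\tilde\E}({\mathbb I}_\gamma{\mathbb I}_{\gamma'})=\prod_{e\in\gamma\cup\gamma'}p_e\le\prod_{e\in\gamma\cup\gamma'}\lambda_e$; hence
\[
{\tilde\E}\bigl(N_2-{\tilde\E}N_2\bigr)^2\le\sum_FC_F\,{\tilde\E}\bigl(\#\{\text{copies of }F\text{ in }H\}\bigr),
\]
the sum over the finite list of bipartite graphs $F$ obtained by gluing two copies of $H_2$ overlapping in $1\le\ell\le4$ common edges ($\ell=4$ being the diagonal $\gamma=\gamma'$). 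For each $F$,
\[
{\tilde\E}\bigl(\#\{\text{copies of }F\}\bigr)\le(nm)^{-|E(F)|/2}\prod_{v\in V(F)\cap V}\Bigl(\sum_{u\in[n]}Y_u^{d_F(v)}\Bigr)\prod_{w\in V(F)\cap W}\Bigl(\sum_{i\in[m]}X_i^{d_F(w)}\Bigr),
\]
with $d_F$ the degree in $F$, and I would bound the white factors by $\sum Y_u=O_P(n)$, $\sum Y_u^2=S_Y$, $\sum Y_u^3\le S_Y^{3/2}$, $\sum Y_u^4\le S_Y^2$ and the black factors by $\sum X_i=O_P(m)$, $\sum X_i^2=O_P(m)$, $\sum X_i^3=o_P(m^{3/2})$, $\sum X_i^r\le o_P(m^{(r-3)/2})S_X$ ($r\ge4$), then check case by case that every term is $o_P(S_Y^2)$: the diagonal gives ${\tilde\E}N_2=O_P(S_Y)=o_P(S_Y^2)$; the extremal case $|E(F)|=7$ (exactly one shared edge) has one black vertex of degree $3$ and at most one white vertex of degree $3$, yielding at worst $(nm)^{-7/2}S_Y^{3/2}S_X\,O_P(n)^4O_P(m)^2=o_P(n^{1/2}S_Y^{3/2})=o_P(S_Y^2)$ by $S_X=o_P(m^{3/2})$, $m=\Theta(n)$ and $n=o_P(S_Y)$, and its variant with two white degree-$2$ vertices gives $(nm)^{-7/2}S_Y^2S_X\,O_P(n)^3O_P(m)^2=o_P(n^{-1/2}S_Y^2)$; the cases $|E(F)|\in\{5,6\}$ (which may have two black degree-$3$ vertices, hence a factor $S_X^2$) come with at most $S_Y^2\,O_P(n)^2$ on the white side, giving $o_P(n^{-1}S_Y^2)$. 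Combining with part (ii) gives ${\tilde\E}(N_2-{\tilde\E}N_2)^2=o_P(S_Y^2)=o_P(({\tilde\E}N_2)^2)$.

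I expect the main obstacle to be precisely this case analysis: one must verify that in every overlap graph $F$ each factor $S_X$ arising from a black vertex of degree $3$ — there are at most two such, and then the white side necessarily carries smaller powers of $S_Y$ — is compensated, through $S_X=o_P(m^{3/2})$, $m=\Theta(n)$ and the superlinearity $n=o_P(S_Y)$, by the loss in the power of $n$ incurred by the larger edge count of $F$. The $\min$ in $p_{ij}=\min\{1,X_iY_j/\sqrt{nm}\}$ is a minor nuisance only: harmless for all the upper bounds, and handled in the lower bound of part (ii) as above.
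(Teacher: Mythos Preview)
Your approach is correct and genuinely different from the paper's.

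\smallskip
\textbf{Part (ii).} The paper writes ${\tilde\E}N_2=\sum_*p^*$, expands $\lambda^*-p^*=\delta_1^*+\delta_2^*+\delta_3^*+\delta_4^*$ as a telescoping sum of four nonnegative pieces, and shows each $\sum_*\delta_i^*=o_P(S_Y)$ by a single truncation argument (using that $\lambda_{xw}>1$ forces $X_w$ large once $\max_jY_j\le n\varepsilon(n)$). This is slightly cleaner than your separate upper/lower bounds, but your argument is perfectly sound; the two are essentially equivalent.

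\smallskip
\textbf{Part (i).} Here the methods diverge. The paper uses a full conditional \emph{Hoeffding decomposition}: each centered indicator product $T^{yxz}_{(w,\tau)}$ is expanded into orthogonal pieces of order $1,2,3,4$ in the Bernoulli variables $({\mathbb I}_a)_{a\in{\cal E}}$, the pieces are regrouped into $U_1^*+U_2^*+U_3^*+U_4^*$, and each ${\tilde\E}(U_k^*)^2$ is bounded directly by reading off the explicit coefficients $Q_a,Q_{ab},Q_{abc}$ attached to each edge, edge-pair, and trunk. Orthogonality kills all cross-terms, so no enumeration of overlap shapes is needed; one simply bounds four sums $Z_1,\dots,Z_4$.

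Your route is the standard \emph{subgraph-count variance} method: bound ${\tilde{\Var}}\,N_2$ by the expected number of copies of each possible union graph $F=\gamma\cup\gamma'$, then case-split over isomorphism types of $F$. This is more combinatorial and arguably more familiar from random-graph theory, but buys you a longer case analysis. Your bookkeeping is a little loose --- for instance, when $|E(F)|=6$ with two black degree-$3$ vertices the white profile is not always ``at most $S_Y^2\,O_P(n)^2$''; it can also be $(1,2,1,1,1)$, giving $n^4S_Y$ --- but every such sub-case still lands at $o_P(S_Y^2)$ via $S_X=o_P(m^{3/2})$, $n=o_P(S_Y)$, and $S_Y\to\infty$. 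The unifying inequality behind all cases is
\[
(nm)^{-E/2}\cdot n^{a}\,S_Y^{(E-a)/2}\cdot O_P(m^{E/2})\;=\;n^{\,a-E/2}S_Y^{(E-a)/2},
\]
where $a\le4$ counts white degree-$1$ vertices; checking $a=0,1,2,3,4$ against $4\le E\le7$ (and noting that all black degrees equal $2$ forces both edges at any shared attribute to be shared) finishes the argument. What the Hoeffding decomposition buys is precisely the avoidance of this enumeration; what your approach buys is that no algebraic machinery beyond ``covariance vanishes unless an edge is shared'' is required.
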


\subsection{Appendix B}

\begin{proof}[Proof of Lemma \ref{L1}] We need some notation.
Given random variable $Z$ and sequence $Z_1,\dots, Z_N$  
of iid copies of $Z$, we  denote $S_N(Z)=Z_1+\cdots+Z_N$. 
For a constant $A>0$  we denote the truncated random variable 
$Z[A]=Z{\mathbb I}_{\{Z\le A\}}$ and $S_N(Z[A])$ 
denotes the sum of truncated iid copies of $Z$. 

Let us prove (\ref{phi}).
Choose a sequence $\{A_n\}$ of positive constants converging to $+\infty$
(slowly enough) such that
 \begin{equation}\label{condition for A}
 \Var(X[A_n])=o(n)
 \qquad
 {\text{and}}
 \qquad
 \E X_n^i[A_n]-\E X^i[A_n]=o(1),
 \quad 
 i=1,2.
 \end{equation}
In particular, we have
$\E X_n[A_n]\to+\infty$ and $\Var X_n[A_n]=o(n)$
as $n\to+\infty$.
Now Chebyshev's inequality  implies
\begin{displaymath}
\PP\bigl(S_n(X_n[A_n]) < (n/2)\E X_n[A_n]\bigr)
\le 
4n^{-1}\bigl(\E X_n[A_n]\bigr)^{-2}\Var X_n[A_n]
=
o(1).
\end{displaymath}
Hence, for $S_n(X_n)\ge S_n(X_n[A_n])$  and $\phi_n=0.5\E X_n[A_n]$
we obtain
$\PP(S_n(X_n)< \phi_n n)=o(1)$.
\end{proof}

\begin{proof}[Proof of Lemma \ref{L7}]
Let $\PP_{\eta}$ denote the conditional probability given $\eta$ and let
$z_n$ denote $\E_{\eta}Z_n$.
  We obtain (i) by Chebyshev's inequality: $\forall$
$\varepsilon>0$ 
\begin{eqnarray}
 \nonumber
\PP(|Z_n-z_n|> \varepsilon z_n)
&=&
\E\min\bigl\{1,\,\PP_{\eta}(|Z_n-z_n|>\varepsilon z_n)\bigr\}
{\mathbb I}_{\{z_n\not=0\}}
\\
\nonumber
&\le&
\E\min
\bigl\{
1,(\varepsilon z_n)^{-2}\E_{\eta}(Z_n-z_n)^2
\bigr\}
{\mathbb I}_{\{z_n\not=0\}}
\\
\nonumber
&=&
o(1).
\end{eqnarray}
In the last step we used the fact that 
$\PP(\E_{\eta}(Z_n-z_n)^2>\delta z_n^2)=o(1)$ for any $\delta>0$.

We obtain (ii) by Markov's inequality: $\forall$ $\varepsilon>0$ 
\begin{displaymath}
 \PP(Z_n>\varepsilon^{-1}z_n)
=
\E 
\Bigl(
\PP_{\eta}(Z_n>\varepsilon^{-1}z_n) {\mathbb I}_{\{z_n\not=0\}}
\Bigr)
\le
\E\bigl( \varepsilon z_n^{-1}\E_{\eta}Z_n\bigr)
{\mathbb I}_{\{z_n\not=0\}}
\le
\varepsilon.
\end{displaymath}

\end{proof}

\begin{proof}[Proof of Lemma \ref{L8}] The poof is elementary. We present it for reader's convenience.

Proof of (i). $\E Z<\infty$ implies that the function 
$\phi(t)=\E Z {\mathbb I}_{\{Z>t\}}$ is non-increasing and $\phi(t)\to 0$ as 
$t\to+\infty$. Choose an increasing positive sequence $\{s_k\}_{k\ge 1}$ such that
$s_k\uparrow +\infty$  and $\phi(s_k)\le 2^{-k}$ and $s_1\ge 1$. Put $s_0=0$.
Consider the non-decreasing function $\psi:[0,+\infty)\to [0,+\infty)$ attaining value $k$ 
on the interval 
$[s_{k-1}, s_k]$, for $k=1,2\dots$. Clearly, $\psi(t)\to +\infty$ as 
$t\uparrow+\infty$ and we have $\E Z\psi(Z)<\infty$. 
Furthermore, we can easily 
modify $\psi(\cdot)$ in order to obtain a strictly 
increasing function satisfying the requirements of  statement (i).
Now we choose 
$\varepsilon(\cdot)$ decaying slowly enough 
($\varepsilon(s)\downarrow 0$ as $s\uparrow+\infty$) so that 
$s\varepsilon(s)\to+\infty$
and
$\varepsilon(s)\psi\bigl(s\varepsilon(s)\bigr)\to+\infty$ as $s\to+\infty$.
Finally, Markov's inequality implies 
\begin{displaymath}
\PP\bigl(Z>s\varepsilon(s)\bigr)
\le
\bigl(s\varepsilon(s)\psi(s\varepsilon(s))\bigr)^{-1}
\E Z\psi(Z){\mathbb I}_{Z\psi(Z)>s\varepsilon(s)}
=
o(s^{-1}).
\end{displaymath}

Proof of (ii).
We write $Z_1^{1+t}+\cdots+Z_n^{1+t}\le AB$, where
$A:=Z_1+\cdots+Z_n$ and 
$B:=\max_{i\in[n]}Z_i^t$, and invoke the bounds 
$A=O_P(n)$
and $B=o_P(n^{t})$. 
The first bound follows by the law of large numbers. 
The second one follows by Markov's inequality and the union bound:
$\forall \delta>0$ we have
\begin{displaymath}
 \PP(B>\delta^{t} n^{t})
=
\PP\Bigl(\max_{i\in[n]}Z_i>\delta n\Bigr)
\le
n\PP(Z_1>\delta n)
\le
\delta^{-1}\E Z_1{\mathbb I}_{Z_1>\delta n}=o(1). 
\end{displaymath}
Similarly, from (i) we obtain
$\PP\bigl(\max_{i\in[n]}Z_i>n\varepsilon (n)\bigr)
\le
n\PP\bigl(Z_1>n\varepsilon (n)\bigr)=o(1)$.
\end{proof}

\begin{proof}[Proof of Lemma \ref{L2}] Proof of (\ref{L2-2}). The  event 
${\cal A}_n=\{\max_{j\in[m_n]}X_j\ge m_n^{4/7}\}$
has probability 
\begin{displaymath}
 \PP({\cal A}_n)\le m_n\PP(X_{n,j}\ge m_n^{4/7})
\le m_n^{-1/7}\E X_{n,j}^2
=o(1).
\end{displaymath}
On the complement event, the left side of (\ref{L2-2}) is less than
\begin{displaymath}
m_n^{-3}\sum_{\{j,k\}\subset [m_n]}m_n^{12/7}X^3_{n,k}
\le
m_n^{-2/7} S_{X,n}.
\end{displaymath}   

Proof of (\ref{L3-1}). Denote $S_{*n}=m_n^{-1}\sum_{i\in [m_n]}X_{n,j}^2$.
The relation 
$\E S_{*n}=\E X_{n}^2\to\E X^2$ implies $S_{*n}=O_P(1)$. The left side of 
(\ref{L3-1}) is less than $S_{*n}^3=O_P(1)$. The right side is superlinear in $m$, by 
Lemma \ref{L1}.

Proof of (\ref{L4-1}). The left side of (\ref{L4-1}) is less than
$m_nS_{*n}^2=O_P(m_n)$. The right side is superlinear in $m$, by 
Lemma \ref{L1}.
\end{proof}

\begin{proof}[Proof of Lemma \ref{L5}]
It is convenient to write $N_1$ in the form
\begin{displaymath}
 N_1=\sum_{w\in W}U_w,
\qquad
U_w
:=
\sum_{\{x,y,z\}\subset V}
{\mathbb I}_{xw}{\mathbb I}_{yw}{\mathbb I}_{zw}.
\end{displaymath}

Proof of (i).
Given ${\mathbb X}$ and ${\mathbb Y}$, the random variables $U_w$, $w\in W$ are 
 conditionally independent. Hence 
\begin{equation}\label{NU}
 {\tilde \E}(N_1-{\tilde \E}N_1)^2
=\sum_{w\in W}{\tilde \E}{\tilde U}^2_w,
\qquad
{\text{where}}
\qquad
{\tilde U}_w:=U_w-{\tilde \E}U_w.
\end{equation}
We bound every expectation  ${\tilde \E}{\tilde U}^2_w$ using conditional
Hoeffding's decomposition ${\tilde U}_w=L_w+ Q_w+K_w$, where the components
\begin{eqnarray}\label{LQK}
&&
L_w=
\sum_{x\in V}{\tilde \E}({\tilde U}_w|{\mathbb I}_{xw}),
\qquad
Q_w
=
\sum_{\{x,y\}\subset V}
{\tilde \E}\bigl({\tilde U}_w-L_w\bigr|{\mathbb I}_{xw},{\mathbb I}_{yw}\bigr),
\\
\nonumber
&&
K_w
=
\sum_{\{x,y,z\}\subset V}
{\tilde \E}\bigl({\tilde U}_w-L_w-Q_w\bigr| {\mathbb I}_{xw},
{\mathbb I}_{yw},{\mathbb I}_{zw}),
\end{eqnarray}
called the linear, quadratic and cubic part of the decomposition, are conditionally uncorrelated. 
We have in particular that
\begin{equation}\nonumber
{\tilde \E}{\tilde U}^2_w
=
{\tilde \E}L^2_w+{\tilde \E}Q^2_w+{\tilde \E}K^2_w.
\end{equation}
Moreover the summands of all three  sums of (\ref{LQK}) are conditionally 
uncorrelated (given ${\mathbb X}, {\mathbb Y}$.
Now (i) follows from (\ref{NU}) and the bounds shown below
\begin{equation}\label{LQK+}
\sum_{w\in W}{\tilde \E}L^2_w=o_P(S^2_X),
\qquad
\sum_{w\in W}{\tilde \E}Q^2_w=o_P(S^2_X),
\qquad
\sum_{w\in W}{\tilde \E}K^2_w=o_P(S^2_X).
\end{equation}
Let us prove (\ref{LQK+}). Denote, for $x,y,z\in V$ and $w\in W$,
\begin{equation}\nonumber
s_{x|w}
= 
\sum_{\{y,z\}\subset V\setminus \{x\}}p_{yw}p_{zw},
\qquad
s_{xy|w}
=
\sum_{z\in V\setminus\{x,y\}}p_{zw}.
\end{equation}
A straightforward calculation shows that
\begin{eqnarray}\nonumber
&&
{\tilde \E}({\tilde U}_w|{\mathbb I}_{xw})
=
({\mathbb I}_{xw}-p_{xw})s_{x|w}
=:
l_w(x),
\\
\nonumber
&&
{\tilde \E}\bigl({\tilde U}_w-L_w\bigr|{\mathbb I}_{xw},{\mathbb I}_{yw}\bigr)
=
({\mathbb I}_{xw}-p_{xw})({\mathbb I}_{yw}-p_{yw})s_{xy|w}
=:
q_w(x,y),
\\
\nonumber
&&
{\tilde \E}\bigl({\tilde U}_w-L_w-Q_w\bigr| {\mathbb I}_{xw},
{\mathbb I}_{yw},{\mathbb I}_{zw})
=
({\mathbb I}_{xw}-p_{xw})({\mathbb I}_{yw}-p_{yw})({\mathbb I}_{zw}-p_{zw})
=:
k_w(x,y,z).
\end{eqnarray}
Invoking the simple inequalities 
$s_{x|w}
\le
X_w^2\beta_n^{-1}{\hat b}_1^2$ 
and
$s_{xy|w}\le X_w\beta_n^{-1/2}{\hat b}_1$ 
we obtain
\begin{eqnarray}\label{lqk}
{\tilde \E}l^2_w(x)
&=&
{\tilde \E}\bigl(({\mathbb I}_{xw}-p_{xw})s_{x|w}\bigr)^2
\le
p_{xw}s_{x|w}^2
\le
X_w^5 \beta_n^{-5/2}Y_xn^{-1}{\hat b}_1^4,
\\
\nonumber
{\tilde \E}q^2_{w}(x,y)
&=&
{\tilde \E}\bigl(({\mathbb I}_{xw}-p_{xw})({\mathbb I}_{yw}-p_{yw})s_{xy|w}\bigr)^2
\le 
p_{xw}p_{yw}s^2_{xy|w}
\le
X_w^4\beta_n^{-2}n^{-2}Y_xY_y{\hat b}_1^2,
\\
\nonumber
{\tilde \E}k^2_w(x,y,z)
&\le& 
p_{xw}p_{yw}p_{zw}
\le 
X_w^3\beta_n^{-3/2}n^{-3}Y_xY_yY_z.
\end{eqnarray}
We note that for $x,y,z\in V$ the 
random variables $l_w(x)$, $q_w(x,y)$ and $k_w(x,y,z)$ are uncorrelated. Hence
\begin{equation}\nonumber
{\tilde \E}L^2_w
=
\sum_{x\in V}{\tilde \E}l^2_w(x),
\qquad
{\tilde \E}Q^2_w
=
\sum_{\{x,y\}\subset V}{\tilde \E}q^2_w(x,y),
\qquad
{\tilde \E}K^2_w
=
\sum_{\{x,y,z\}\subset V}{\tilde \E}k^2_w(x,y,z)
\end{equation}
Now from (\ref{lqk}) we obtain the bounds
\begin{equation}\nonumber
{\tilde \E}L^2_w=O_P\Bigl(\sum_{w\in W}X_w^5\Bigr),
\qquad
{\tilde \E}Q^2_w
=
O_P\Bigl(\sum_{w\in W}X_w^4\Bigr),
\qquad
{\tilde \E}K^2_w
=
O_P\Bigl(\sum_{w\in W}X_w^3\Bigr).
\end{equation}
Next, we apply H\"older's inequality. For $r=3,4,5$, we have
\begin{displaymath}
 \Bigl(\sum_{w\in W}1\cdot X_w^r\Bigr)^{6/r}
\le
\Bigl(\sum_{w\in W}1\Bigr)^{(6-r)/r}
\sum_{w\in W}X_w^6
\le 
m^{(6-r)/r}
S_X^2 .
\end{displaymath}
Finally, 
from the bound $m=o_P(S_X)$, which holds for $\E X_1^3=\infty$, see Lemma
\ref{L1},
we obtain
\begin{displaymath}
 \sum_{w\in W}X_w^r
\le 
m^{(6-r)/6}S_X^{r/3}
=
o_P\bigl(S_X^{(6-r)/6}\bigr)S_X^{r/3}
=o_P(S_X^2).
\end{displaymath}

\medskip

Proof of (ii). Denote 
$H_w=\sum_{x\in V}\lambda_{xw}$ and $R_w=H^3_w-6{\tilde \E}U_w$.
We have
\begin{equation}\nonumber
{\tilde \E}N=\sum_{w\in W} {\tilde \E}U_w
=
6^{-1}\sum_{w\in W}H^3_w
-
6^{-1}\sum_{w\in W}R_w.
\end{equation}
A straightforward calculation shows that
\begin{displaymath}
 \sum_{w\in W}H^3_w
=
\beta_n^{-3/2}
{\hat b}_1^3
S_X
=
(1+o_P(1))
\beta^{-3/2}
b_1^3S_X. 
\end{displaymath}
Hence, it remains to prove that
 $\sum_{w\in W}R_w=o_P(S_X)$.
To show this bound we write $R_w=R_{1,w}+R_{2,w}$, where
\begin{displaymath}
R_{1,w}=H_w^3-Z_w^3,
\qquad
R_{2,w}=Z_w^3-6{\tilde \E}U_w,
\qquad
{\text{and}}
\qquad
Z_w=\sum_{x\in V}p_{xw},
\end{displaymath}
and establish the bounds
\begin{equation}\label{RR}
\sum_{w\in W}R_{1,w}=o_P(S_X)
\qquad
{\text{and}}
\qquad
\sum_{w\in W}R_{2,w}=o_P(S_X).  
\end{equation}
We first prove the second bound of (\ref{RR}). We have
\begin{eqnarray}
\label{ZU}
&&
0
\le 
R_{2,w}
=
\sum_{x\in V}p^3_{xw}
+
3\sum_{x\in V}\sum_{y\in V\setminus\{x\}}p^2_{xw}p_{yw}
\le
\beta_n^{-3/2}X_w^3\bigl(n^{-2}{\hat b}_3+3n^{-1}{\hat b}_1{\hat b}_2\bigr).
\end{eqnarray}
In the last step we used  $p_{xw}\le \lambda_{xw}$. Next, invoking
the bounds
$n^{-2}{\hat b}_1^3, n^{-1}{\hat b}_2=o_{P}(1)$, which hold for $\E Y_1<\infty$, 
by Lemma (\ref{L8}), we obtain
\begin{displaymath}
0
\le
\sum_{w\in W}R_{2,w}
\le 
\beta_n^{-3/2}S_X 
\bigl(n^{-2}{\hat b}_3+3n^{-1}{\hat b}_1{\hat b}_2\bigr)
=
o_P(S_X).
\end{displaymath}
Let us  prove the first bound of (\ref{RR}).
We note that  $\E X_1^2<\infty$,
$\E Y_1<\infty$ imply that 
\begin{equation}\label{AB}
 \E X_1^2{\mathbb I}_{\{X_1>\sqrt{m}\}}\to 0,
\qquad
\E Y_1{\mathbb I}_{\{Y_1>\sqrt{n}\}}\to 0.
\end{equation}
We select a sequence $\delta_n\downarrow 0$
such that $\E Y_1{\mathbb I}_{\{Y_1>\sqrt{n}\}}=o(\delta_n)$ and introduce events
\begin{displaymath}
{\cal A}=\Bigl\{\max_{i\in[m]}X_i\le\sqrt{m}\Bigr\},
\qquad
{\cal B}
=
\Bigl\{n^{-1}\sum_{j\in [n]}Y_j{\mathbb I}_{\{Y_j>\sqrt{n}\}}\le \delta_n\Bigr\}.
\end{displaymath}
We claim that $\PP({\cal A}), \PP({\cal B})=1-o(1)$.
Indeed, by Markov's inequality and (\ref{AB}) 
\begin{eqnarray}\nonumber
 &&
1-\PP({\cal A})
\le 
\sum_{i\in[m]}\PP(X_i>\sqrt{m})
\le 
\E X_1^2{\mathbb I}_{\{X_1>\sqrt{m}\}}
\to 0,
\\
\nonumber
&&
1-\PP({\cal B})
\le 
(n\delta_n)^{-1}\sum_{j\in[n]}\E Y_j{\mathbb I}_{\{Y_j>\sqrt{n}\}}
= 
\delta_n^{-1}\E Y_1{\mathbb I}_{\{Y_1>\sqrt{n}\}}
\to 0.
\end{eqnarray}
Assuming that  events ${\cal A}$ and ${\cal B}$ hold
 we estimate the difference
\begin{equation}\label{ZH4}
 H_w-Z_w=\sum_{x\in V}(\lambda_{xw}-1){\mathbb I}_{\{\lambda_{xw}>1\}}
\le
\sum_{x\in V}\lambda_{xw}{\mathbb I}_{\{Y_x>\sqrt{n}\}}
\le X_w\beta_{n}^{-1/2}\delta_n.
\end{equation}
Here we used the inequality 
${\mathbb I}_{\{\lambda_{xw}>1\}}\le {\mathbb I}_{\{Y_x>\sqrt{n}\}}$, which holds
for $X_w\le\sqrt{m}$. Invoking (\ref{ZH4}) in the inequalities
\begin{eqnarray}\nonumber
\label{ZH}
&&
0
\le 
R_{1,w}
= 
(H_w-Z_w)(Z_w^2+Z_wH_w+H_w^2)\le 3(H_w-Z_w)H_w^2,
\end{eqnarray}
and using the identity $H_w^2=X_w^2\beta_n^{-1}{\hat b}_1^2$,  we obtain
\begin{displaymath}
 \sum_{w\in W}R_{1,w}
\le S_X\beta_n^{-3/2}{\hat b}_1^2\delta_n.
\end{displaymath}
For the latter inequality holds with probability $1-o(1)$ and $\delta_n=o(1)$, 
we conclude that $\sum_{w\in W}R_{1,w}=o_P(S_X)$.

\end{proof}

\begin{proof}[Proof of Lemma \ref{unlucky+}]
Proof of (i). In the proof we make use of Hoeffding's decomposition.
Let ${\mathbb I}_j$, $j\in [4]$ be independent Bernoulli random variables
with positive success probabilities $p_j$, $j\in [4]$.
Hoeffding's decomposition represents the random variable 
$T={\mathbb I}_1{\mathbb I}_2{\mathbb I}_3{\mathbb I}_4-p_1p_2p_3p_4$ 
by the sum of 
uncorrelated $U$ statistics of increasing order
\begin{equation}\label{Hoeffding+}
T
=U_1+U_2+U_3+U_4,
\qquad
U_1=\sum_{j\in [4]}T_j,
\qquad
U_2=\sum_{\{i,j\}\subset [4]}T_{ij},
\qquad
U_3=\sum_{\{i,j,k\}\subset [4]}T_{ijk}.
\end{equation}
The first, second, and third order terms
$T_{i}$, $T_{ij}$, and $T_{ijk}$ are defined iteratively as follows 
\begin{eqnarray}\label{Hoeffding++}
T_i=\E(T|{\mathbb I}_i),
\qquad
T_{ij}=\E\bigl(T-U_1\bigl|{\mathbb I}_i, {\mathbb I}_j\bigr),
\qquad
T_{ijk}=\E\bigl(T-U_1-U_2\bigl|{\mathbb I}_i,{\mathbb I}_j,{\mathbb I}_k\bigr).
\end{eqnarray}
Denoting $p=p_1p_2p_3p_4$ and $p_{i}^*=p/p_i$, 
$p_{ij}^*=p/(p_ip_j)$, $p_{ijk}^*=p/(p_ip_jp_k)$ we have
\begin{eqnarray}\nonumber
&&
 T_i=({\mathbb I}_i-p_i)p_i^*,
\qquad
T_{ij}=({\mathbb I}_{i}{\mathbb I}_{j}-p_ip_j)p_{ij}^*-T_i-T_j,
\\
\nonumber
&&
T_{ijk}=({\mathbb I}_{i}{\mathbb I}_{j}{\mathbb I}_{k}-p_ip_jp_k)p^*_{ijk}-
T_{ij}-T_{ik}-T_{jk}-T_i-T_j-T_k.
\end{eqnarray}

The fourth order term $U_4=T_{1234}:=T-U_1-U_2-U_3$. We note that 
various terms of Hoeffding's decomposition are mutually 
uncorrelated. 

Let us prove the lemma. Denote 
$\Lambda_*=N_2-{\tilde \E}N_2$ and 
$T^{yxz}_{(w,\tau)}
={\mathbb I}_{yw}{\mathbb I}_{xw}
{\mathbb I}_{x\tau}{\mathbb I}_{z\tau}
-
{\tilde\E}{\mathbb I}_{yw}{\mathbb I}_{xw}
{\mathbb I}_{x\tau}{\mathbb I}_{z\tau}
$.
We have 
\begin{equation}\label{LambdaVAR}
\Lambda_*
=
\sum_{x\in V}
\,
\sum_{\{y,z\}\subset V\setminus\{x\}}
\,
\sum_{w\in W}
\sum_{\tau\in W\setminus\{w\}}T^{yxz}_{(w,\tau)}.
\end{equation}

We decompose every  $T^{yxz}_{(w,\tau)}$ using (\ref{Hoeffding+})
and invoke these decompositions in  (\ref{LambdaVAR}). We then group the 
first order 
terms, the second order terms, etc. and obtain 
Hoeffding's decomposition of $\Lambda_*$,
\begin{displaymath}
 \Lambda_*=U_1^*+U_2^*+U_3^*+U_4^*.
\end{displaymath}
We specify the linear part $U_1^*$ (the sum of the first order terms),
 quadratic part $U_2^*$ (the sum of the second order terms), etc.
 in  
(\ref{Hoeffding+++}) below.
For this purpose we introduce 
some more notation. Consider the complete bipartite graph
${\cal K}_{V,W}$ with the bipartition $V\cup W$. Let 
${\cal E}=\{(y,w): y\in V,\, w\in W\}$ denote the set of 
edges of ${\cal K}_{V,W}$.
Let ${\cal E}^{*}$
denote the set of paths of length $4$  
which start from $V$. 
After we remove an edge of such a path we obtain a triple of edges, which we call trunk.
The set of trunks is denoted ${\cal E}^{**}$. For any edge $a=(yw)\in {\cal E}$
we denote ${\mathbb I}_{a}={\mathbb I}_{yw}$ the indicator of the event that 
vertex $y$ is linked to the attribute $w$ in the random bipartite graph $H$.
We also denote $p_a={\tilde \E}{\mathbb I}_a$. Furthermore
for distinct edges $a,b,c,d\in{\cal E}$
we denote
 \begin{eqnarray}\nonumber
  t_a
&
=
&
{\mathbb I}_a-p_a,
\qquad
t_{ab}
=
\bigl({\mathbb I}_a{\mathbb I}_b-p_ap_b\bigr)
-
({\mathbb I}_a-p_a)p_b-({\mathbb I}_b-p_b)p_a,
\\
\nonumber
t_{abc}
&
=
&
\bigl({\mathbb I}_a{\mathbb I}_b{\mathbb I}_c-p_ap_bp_c\bigr)-
t_{ab}p_c-t_{ac}p_b-t_{bc}p_a-t_ap_bp_c-t_bp_ap_c-t_cp_ap_b.
 \end{eqnarray}
Finally, $t_{abcd}$ is defined as $T_{1234}$ above, but for 
$T={\mathbb I}_{a}{\mathbb I}_{b}{\mathbb I}_{c}{\mathbb I}_{d}-p_ap_bp_cp_d$.

A calculation shows that 
\begin{eqnarray}\label{Hoeffding+++}
&&
U_1^*=
\sum_{a\in{\cal E}}t_aQ_a,
\qquad\qquad
\quad
\
\
U_2^*
=
\sum_{\{a,b\}\subset{\cal E}}t_{ab}Q_{ab},
\\
\nonumber
&&
U_3^*
=
\sum_{\{a,b,c\}\in{\cal E}^{**}}t_{abc}Q_{abc},
\quad
U_4^*
=
\sum_{\{a,b,c,d\}\in {\cal E}^*}t_{abcd},
\end{eqnarray}
where coefficients $Q_a$, $Q_{ab}$ and $Q_{abc}$ are given below.
 For any $a=(y,w)$ we have
\begin{eqnarray}
&&
Q_a
=
Q_{a1}+Q_{a2},
\\
\nonumber
&&
Q_{a1}=\sum_{x\in V\setminus\{y\}}p_{xw}
\sum_{\tau\in W\setminus\{w\}}p_{x\tau}
\sum_{z\in V\setminus\{x,y\}}p_{z\tau},
\qquad
Q_{a2}=\sum_{x\in V\setminus\{y\}}p_{xw}
\sum_{\tau\in W\setminus\{w\}}p_{y\tau}
\sum_{z\in V\setminus\{x,y\}}p_{z\tau}.
\end{eqnarray}
We note that sums $Q_{a1}$ and $Q_{a2}$  
represent $4$-paths, where $y$ has degree $1$ and degree $2$ respectively
(e.g., paths $y\sim w\sim x\sim \tau\sim z$ and $x\sim w\sim y\sim \tau\sim z$).
Furthermore, for a non incident pair $a=(y,w)$ and $c=(x,\tau)$ we have
\begin{eqnarray}\label{Z21}
&&
Q_{ac}=Q_{ac1}+Q_{ac2}+Q_{ac3}
\\
\nonumber
&&
Q_{ac1}
=
p_{y\tau}
\sum_{z\in V\setminus\{y,x\}}p_{z\omega},
\qquad
Q_{ac2}
=
p_{xw}
\sum_{z\in V\setminus\{y,x\}}p_{z\tau},
\qquad
Q_{ac3}=\sum_{z\in V\setminus\{y,x\}}p_{zw}p_{z\tau},
\end{eqnarray}
The sum $Q_{ac1}$ ($Q_{ac2}$)
represents $4$-paths, where $y$ ($x$) has degree $2$
(e.g., paths $x\sim \tau\sim y\sim w\sim z$ and $y\sim w\sim x\sim \tau\sim z$). 
The sum $Q_{ac3}$ 
represents $4$-paths, where $y$ and $x$ has degree $1$ (e.g., paths 
 $y\sim w\sim z\sim \tau\sim x$).
Similarly, for incident pairs 
$a=(y,w)$, $b=(x,w)$ and  $b=(x,w)$, $c=(x,\tau)$
we have
\begin{eqnarray}\label{Z22}
 Q_{ab}
=
\sum_{z\in V\setminus\{x,y\}}\sum_{\tau\in W\setminus\{w\}}
\bigl(p_{x\tau}p_{z\tau}+p_{y\tau}p_{z\tau}\bigr)
\qquad
Q_{bc}
=
\sum_{\{y,z\}\subset V\setminus\{x\}}
\bigl(p_{yw}p_{z\tau}+p_{zw}p_{y\tau}\bigr).
\end{eqnarray}
Finally, for a trunk $\{a,b,c\}$ which makes up a $3$-path, say,
$a=(yw)$, $b=(xw)$, $c=(x\tau)$, 
we have $Q_{a,b,c}=\sum_{z\in V\setminus\{x,y\}}p_{z\tau}$. For a trunk $\{a,b,d\}$
which is not a path (a union of $2$-path and an edge), say,
$a=(yw)$, $b=(xw)$ and $d=(z\tau)$, we have $Q_{a,b,d}=p_{x\tau}$.

Now we estimate ${\tilde \E }\Lambda_*^2$. From the fundamental property of 
Hoeffding's decomposition that various terms are uncorrelated we obtain 
that
\begin{eqnarray}\nonumber
{\tilde \E }\Lambda_*^2 
&=&
\sum_{a\in{\cal E}}Q_a^2{\tilde E}t_a^2
+
\sum_{\{a,b\}\subset {\cal E}}Q_{ab}^2{\tilde E}t_{ab}^2
+
\sum_{\{a,b,c\}\in {\cal E}^{**}}Q_{abc}^2{\tilde E}t_{abc}^2
+
\sum_{\{a,b,c,d\}\subset {\cal E}^*}{\tilde E}t_{abcd}^2.
\end{eqnarray}
It remains to
show that the sums in the right, which we denote  by $Z_1,Z_2,Z_3,Z_4$, are of order
$o_P(S_Y^2)$. 
For this purpose
we combine the expressions of $Q_{a\dots c}$ obtained above with the simple
 inequalities
\begin{displaymath}
 {\tilde \E}t_a^2\le p_a,
\qquad
 {\tilde \E}t_{ab}^2\le C p_ap_b,
\qquad
 {\tilde \E}t_{abc}^2\le C p_ap_bp_c,
\qquad
 {\tilde \E}t_{abcd}^2\le C p_ap_bp_cp_d.
\end{displaymath}
Here $C$ is an absolute constant. We also use the inequalities 
$p_{xw}\le(nm)^{-1/2}Y_xX_w$.

Proof of the bound $Z_1=o_P(S_Y^2)$. We have
\begin{displaymath}
 Z_1
\le 
\sum_{y\in V}\sum_{w\in W}p_{yw}(Q_{(yw)1}+Q_{(yw)2})^2.
\end{displaymath}
Invoking the inequalities 
$(Q_{(yw)1}+Q_{(yw)2})^2\le 2Q_{(yw)1}^2+2Q_{(yw)2}^2$
and
\begin{eqnarray}\nonumber
Q_{(yw)1}
&
\le
&
 \sum_{x\in V\setminus\{y\}}
\sum_{\tau\in W\setminus\{w\}}
\sum_{z\in V\setminus\{x,y\}}
\frac{Y_xX_w}{\sqrt{nm}}
\frac{Y_xX_{\tau}}{\sqrt{nm}}
\frac{Y_zX_\tau}{\sqrt{nm}}
\le 
\frac{X_w}{\sqrt{m}}\frac{S_Y}{\sqrt{n}}{\hat a}_2{\hat b}_1,
\\
\nonumber
Q_{(yw)2}
&
\le
&
\sum_{x\in V\setminus\{y\}}
\sum_{\tau\in W\setminus\{w\}}
\sum_{z\in V\setminus\{x,y\}}
\frac{Y_xX_w}{\sqrt{nm}}
\frac{Y_yX_{\tau}}{\sqrt{nm}}
\frac{Y_zX_\tau}{\sqrt{nm}}
\le
\frac{X_w}{\sqrt{m}}Y_y\sqrt{n}{\hat a}_2{\hat b}_1^2,
\end{eqnarray}
we obtain
\begin{displaymath}
Z_1
\le 
2
{\hat a}_2^2{\hat b}_1^3
\frac{S_Y^2}{\sqrt{n}}\frac{S_X}{m^{3/2}}
+
2
{\hat a}_2^2{\hat b}_1^4
\frac{S_X}{m^{3/2}}\sqrt{n}\sum_{y\in V}Y_y^3.
\end{displaymath}
Note that $\E X_1^2<\infty$ implies $S_Xm^{-3/2}=o_P(1)$. Furthermore, we have
${\hat a}_2,{\hat b}_1=O_P(1)$.
Hence the first summand is $o_P(S_Y^2)$. 
To show that the second summand is $o_P(S_Y^2)$
we use the fact (which follows from  $\E Y_1^2=\infty$ by Lemma \ref{L1}) that
$n=o_P(S_Y)$ and invoke inequalities
 \begin{equation}\label{Y3}
 \sum_{y\in V}Y_y^3\le S_Y\max_{y\in V}Y_y\le S_Y (S_Y)^{1/2}.
\end{equation}
We obtain $\sqrt {n}\sum_{y\in V}Y_y^3=o_P(\sqrt{S_Y})S_Y^{3/2}=o_P(S_Y^2)$.
We conclude that $Z_1=o_P(S_Y^2)$.

Proof of the bound $Z_2=o_P(S_Y^2)$. We split $Z_2=Z_{21}+Z_{22}+Z_{23}$, where
the sum 
\begin{displaymath}
 Z_{21}
=
\sum_{\{x,y\}\subset V}
\sum_{w\in W}
\sum_{\tau\in W\setminus\{w\}}
t_{(yw)(x\tau)}^2Q_{(yw)(x\tau)}^2
\end{displaymath}
accounts for pairs of non incident edges $a=(yw)$ and $c=(x\tau)$, while the sums
\begin{displaymath}
 Z_{22}=\sum_{\{x,y\}\subset V}\sum_{w\in W}t_{(yw)(xw)}^2Q_{(yw)(xw)}^2
\qquad
{\text{and}}
\qquad
Z_{23}=\sum_{x\in V}\sum_{\{w,\tau\}\subset W}t_{(xw)(x\tau)}^2Q_{(xw)(x\tau)}^2
\end{displaymath}
account for pairs of incident edges
$a=(y,w)$, $b=(x,w)$ and  $b=(x,w)$, $c=(x,\tau)$ respectively.
To estimate $Z_{21}$ we use (\ref{Z21}) and obtain that 
\begin{displaymath}
 Q_{(yw)(x\tau)}
\le 
\frac{X_wX_{\tau}}{m}
\Bigl(Y_y{\hat b}_1+Y_x{\hat b}_1+\frac{S_Y}{n}\Bigr).
\end{displaymath}
Hence, 
\begin{displaymath}
 Z_{21}
\le 
C\frac{S_X^2}{m^3}
\Bigl(
2{\hat b}_1^3\sum_{x\in V}Y_x^3
+
{\hat b}_1^2n^{-1}S_Y^2
\Bigr).
\end{displaymath}
From (\ref{Y3}) and the fact that $n=o_P(S_Y)$
 we conclude that $Z_{21}=o_P(S_Y^2)$.
To estimate $Z_{22}$ and $Z_{23}$ we use the first and second 
identities of (\ref{Z22}). We obtain
\begin{displaymath}
 Q_{(yw)(xw)}
\le 
(Y_x+Y_y){\hat a}_2{\hat b}_1
\qquad
{\text{and}}
\qquad
Q_{(xw)(x\tau)}
\le 
2{\hat b}_1^2X_wX_{\tau}nm^{-1}.
\end{displaymath}
Hence,
\begin{displaymath}
 Z_{22}
\le 
C
{\hat a}_2^3{\hat b}_1^3\sum_{x\in V}Y_x^3,
\qquad
Z_{23}
\le
C
{\hat b}_1^4S_X^2m^{-3}S_Yn.
\end{displaymath}
We note that both quantities on the right are of order $o_P(S_Y^2)$, since
$n=o_P(S_Y)$ by Lemma \ref{L1} and $S_X^2=o_P(m^3)$. We conclude that 
$Z_2=o_P(S_Y^2)$.

Proof of the bound $Z_3=o_P(S_Y^2)$. We split $Z_3=Z_{31}+Z_{32}$, where
\begin{eqnarray}
\nonumber
 Z_{31}
&
=
&
\sum_{y\in V}
\sum_{w\in W}
\sum_{x\in V\setminus\{y\}}
\sum_{\tau\in W\setminus\{w\}}
t_{(yw)(xw)(x\tau)}^2 
\Bigl(
\sum_{z\in V\setminus\{x,y\}}p_{z\tau}
\Bigr)^2,
\\
\nonumber
Z_{32}
&
=
&
\sum_{\{x,y\}\subset V}
\sum_{w\in W}
\sum_{\tau\in W\setminus\{w\}}
\sum_{z\in V\setminus\{x,y\}}
t_{(yw)(xw)(z\tau)}^2\bigl(p_{y\tau}+p_{x\tau}\bigr)^2.
\end{eqnarray}
We have
\begin{displaymath}
 Z_{31}
\le 
C
{\hat a}_2{\hat b}_1^3\frac{S_X}{m^{3/2}}\sqrt{n}S_Y,
\qquad
Z_{32}
\le
C
{\hat a}_2{\hat b}_1^2\frac{S_X}{m^{3/2}}n^{-1/2}\sum_{x\in V}Y_x^3.
\end{displaymath}
By the same argument as above we obtain that $Z_3=o_P(S_Y^2)$.

Finally, we have
\begin{displaymath}
 Z_4\le \sum_{x\in V}
\sum_{y\in V\setminus \{x\}}
\sum_{z\in V\setminus\{x,y\}}
\sum_{w\in W}
\sum_{\tau\in W\setminus\{w\}}
t_{(yw)(xw)(x\tau)(z\tau)}^2
\le
C
{\hat a}_2^2{\hat b}_1^2S_Y=O_P(S_Y)=o_P(S_Y^2).
\end{displaymath}
The proof of the bound 
${\tilde \E}\bigl(N_2-{\tilde \E}N_2\bigr)^2
=
o_P\bigl(S_Y^2\bigr)$ 
is completed.

\medskip

Proof of (ii). Denoting the sum 
$\sum_{\Lambda}\sum_{w\in W}\sum_{\tau\in W\setminus\{w\}}$ by 
$\sum_{*}$
and using the shorthand notation 
\begin{eqnarray}\nonumber
&&
{\mathbb I}^*
=
{\mathbb I}_{xw}{\mathbb I}_{yw}{\mathbb I}_{x\tau}{\mathbb I}_{z\tau}, 
\qquad
p^*= p_{xw}p_{yw}p_{x\tau}p_{z\tau},
\qquad
\lambda^*=\lambda_{xw}\lambda_{yw}\lambda_{x\tau}\lambda_{z\tau},
\\
\nonumber
&&
\delta_1^*=(\lambda_{xw}-p_{xw})\lambda_{yw}\lambda_{x\tau}\lambda_{z\tau},
\qquad
\delta_2^*=p_{xw}(\lambda_{yw}-p_{yw})\lambda_{x\tau}\lambda_{z\tau},
\\
\nonumber
&&
\delta_3^*=p_{xw}p_{yw}(\lambda_{x\tau}-p_{x\tau})\lambda_{z\tau},
\qquad
\delta_4^*=p_{xw}p_{yw}p_{x\tau}(\lambda_{z\tau}-p_{z\tau}),
\end{eqnarray}
we have  $N_2=\sum_*{\mathbb I}^*$ and  ${\tilde \E}{\mathbb I}^*=p^*$,
and ${\tilde \E}N_2=\sum_{*}p^*$. 

We derive  (ii) from the relations shown below
\begin{equation}\label{pl}
 \sum_*p^*=(1+o_P(1)) \sum_{*}\lambda^*
\qquad
{\text{and}}
\qquad
\sum_{*}\lambda^*=2^{-1}{\hat a}_2^2{\hat b}_1^2S_Y+o_P(S_Y).
\end{equation}
To prove the second relation we regroup the sum
\begin{eqnarray}
\nonumber
 \sum_*\lambda^*
 &=&
 \sum_{\Lambda}Y_x^2\frac{Y_yY_z}{n^2}
\Bigl({\hat a}_2^2-\sum_{w\in W}\frac{X_w^4}{m^2}\Bigr)
\\
\label{vasario3b}
&=&
\frac{1}{2}
\sum_{x\in V}
Y_x^2
\left(
\Bigl({\hat b}_1-\frac{Y_x}{n}\Bigr)^2
-
\sum_{z\in V\setminus\{x\}}\frac{Y_z^2}{n^2}
\right)
\Bigl({\hat a}_2^2-\sum_{w\in W}\frac{X_w^4}{m^2}\Bigr).
\end{eqnarray}
For $\E X_1^2<\infty$ and $\E Y_1<\infty$, we obtain from Lemma \ref{L8}
that 
\begin{equation}\nonumber
\sum_{w\in W}\frac{X_w^4}{m^2} =o_P(1),
\quad
\
\sum_{z\in V}\frac{Y_z^2}{n^2}=o_P(1),
\quad
\
\sum_{x\in V}\frac{Y_x^3}{n}=o_P(S_Y),
\quad
\
\sum_{x\in V}\frac{Y_x^4}{n^2}=o_P(S_Y).
\end{equation}
Invoking these bounds in (\ref{vasario3b}) we obtain
the second relation of (\ref{pl}).

To prove the first bound of (\ref{pl}) we write 
\begin{eqnarray}\nonumber
&&
 0\le \sum_{*}\lambda^*-\sum_{*}p_*=\sum_{*}
(\delta_1^*+\delta_2^*+\delta_3^*+\delta_4^*)
\le
\sum_{*}\lambda^*({\mathbb I}_{1}^*+{\mathbb I}_{2}^*+
{\mathbb I}_{3}^*+{\mathbb I}_{4}^*),
\\
\nonumber
&&
{\mathbb I}_{1}^*:={\mathbb I}_{Y_xX_w>\sqrt{nm}},
\quad
{\mathbb I}_{2}^*:={\mathbb I}_{Y_yX_w>\sqrt{nm}},
\quad
{\mathbb I}_{3}^*:={\mathbb I}_{Y_xX_\tau>\sqrt{nm}},
\quad
{\mathbb I}_{4}^*:={\mathbb I}_{Y_zX_\tau>\sqrt{nm}}
\end{eqnarray}
and estimate $\sum_{*}\lambda^*{\mathbb I}_{i}^*=o_P(S_Y)$,
for $i\in[4]$. We only show this bound for $i=1$.
Let $\varepsilon(\cdot)$ be the function
associated with the distribution of $Y_1$ by Lemma \ref{L8}. So that 
$\varepsilon(n)=o(1)$ and 
with probability $1-o_P(1)$ we have $\max_{x\in V}Y_x\le n\varepsilon(n)$. 
If the latter inequality holds, 
then every event ${\mathbb I}_{Y_xX_w>\sqrt{nm}}=1$ implies 
$X_w>\beta_n^{1/2}\varepsilon^{-1}(n)$. We denote the indicator of 
the latter event ${\mathbb I}_w$. We have with probability $1-o(1)$
\begin{displaymath}
 \sum_{*}\lambda^*{\mathbb I}_1^*
\le
\sum_{\Lambda}Y_x^2\frac{Y_yY_z}{n^2}
\sum_{w\in W}\frac{X_w^2}{m}{\mathbb I}_w
\sum_{\tau\in W\setminus\{w\}}\frac{X_\tau^2}{m}
\le
2^{-1}{\hat a}_2{\hat b}_1^2S_Y\sum_{w\in W}\frac{X_w^2}{m}{\mathbb I}_w.
\end{displaymath}
Finally, $\E X_1^2<\infty$ implies 
$\E \sum_{w\in W} \frac{X_w^2}{m}{\mathbb I}_w=o(1)$.
Hence $\sum_{w\in W} \frac{X_w^2}{m}{\mathbb I}_w=o_P(1)$.

\end{proof}


\begin{thebibliography}{}









 
\bibitem{Bloznelis2013}
Bloznelis, M.:
Degree and clustering coefficient in sparse random intersection graphs, 
The Annals of Applied Probability 23 (2013), 1254--1289.




\bibitem{BloznelisDamarackas2013}
Bloznelis, M.,  Damarackas, J.:
Degree distribution of an inhomogeneous random intersection graph.
Electron. J. Comb.  20(3) (2013), $\#$P3. 
 
\bibitem{BloznelisGJKR2015-1}
Bloznelis, M.,  Godehardt, E., Jaworski, J., Kurauskas, V., Rybarczyk, K.:
Recent Progress in Complex Network Analysis - Models 
of Random Intersection Graphs. 
In: Lausen, B., Krolak-Schwerdt, S., B{\"o}hmer, M. (eds) 
Data Science, Learning by Latent Structures, and
Knowledge Discovery,
Springer, Berlin, (2015),
69--78.

 
\bibitem{BloznelisGJKR2015-2}
Bloznelis, M.,  Godehardt, E., Jaworski, J., Kurauskas, V., Rybarczyk, K.:
Recent Progress in Complex Network Analysis
- Properties of Random Intersection Graphs.
In: Lausen, B., Krolak-Schwerdt, S., B{\"o}hmer, M. (eds) 
Data Science, Learning by Latent Structures, and
Knowledge Discovery, 
Springer, Berlin, (2015), 79--88.


\bibitem{BloznelisKurauskas2015}
Bloznelis, M., Kurauskas, K.:
Clustering function: another view on clustering coefficient.
Journal of Complex Networks
4 (2016), 61--86. 


\bibitem{Deijfen}
Deijfen, M., Kets, W.:
    Random intersection graphs with tunable degree distribution and clustering,
Probab. Engrg. Inform. Sci. 23 (2009), 661--674.



\bibitem{FellerII} 
Feller, W.: 
An introduction to probability theory and its applications. Vol. II. 
2nd Edition. 
John Wiley $\&$ Sons, New York, (1971). 


\bibitem{FossKZ2013}
Foss, S.,  Korshunov, D., Zachary, S.:
An Introduction to Heavy-Tailed and Subexponential Distributions. 2nd Edition.  
 Springer, New York, (2013). 

\bibitem{godehardt2003}
   Godehardt, E., Jaworski, J.:
   Two models of random intersection graphs for classification, in:
   Studies in Classification, Data Analysis and Knowledge Organization,
   Springer,
   Berlin,
   (2003),
   67--81.

\bibitem{GJR2012}
  Godehardt, E., Jaworski, J., Rybarczyk, K.:   
  Clustering coefficients of random intersection graphs, in:
   Studies in Classification, Data Analysis and Knowledge Organization,
   Springer,
   Berlin,
   (2012), 243--253.




\bibitem{JacobMorters2013}
   Jacob, E., M\"orters, P.:
   A spatial preferential attachment model with local clustering,
   In: Bonato, A., Mitzenmacher, M.,  Pra\l at, P. (eds) 
   Algorithms and Models for the Web Graphs, WAW2013, LNCS 8305,
   Springer,
   Berlin,
   (2013),
   14--25.



   \bibitem{karonski1999}
   Karo\'nski, M., Scheinerman,  E. R., Singer-Cohen, K. B.:
   On random intersection graphs: The subgraph problem,
   Combinatorics, Probability and Computing
   8 (1999),
   131--159.

\bibitem{kurauskas2015}
Kurauskas, V.:
On local weak limit and subgraph counts for sparse random graphs, 
arXiv:1504.08103v2, (2015).



\bibitem{Newman2001}
Newman, M. E. J., Strogatz, S. H.,   Watts, D. J.:
Random graphs with arbitrary degree distributions and their applications,
Physical Review E
64 (2001) 
026118.

\bibitem{Newman2001aCUTOFF} 
Newman, M. E. J.:
The structure of scientific collaboration networks, 
Proc. Natl.Acad. Sci. USA 98 (2001), 404--409.

\bibitem{Newman2001bCUTOFF}
Newman, M. E. J.:
Scientific collaboration networks. I and II, 
Physical Review E
64 (2001) 
016131, 016132.


\bibitem{OstroumovaSamosvat2014}
   Ostroumova Prokhorenkova, L., Samosvat, E.:
   Global clustering coefficient in scale-free networks,
   In: Bonato, A., Chung Graham, F.,  Pra\l at, P. (eds) 
   Algorithms and Models for the Web Graphs, WAW2014, LNCS 8882,
   Springer,
   Berlin,
   (2014),
   47--58.

\bibitem{Redner1998}
Redner, S.: How popular is your paper? 
An empirical study of the citation distribution,
Eur.Phys. J. B 4. (1998), 131--134.



\end{thebibliography}
\end{document}